\newcommand{\subparagraph}{}
\normalfont\fontsize{13}{16}\bfseries}{\thesection}{1em}{}
\normalfont\fontsize{11}{14}\bfseries}{\thesubsection}{1em}{}
\newtheorem{lemma}{Lemma}
\renewcommand{\emptyset}{\varnothing}
\theoremstyle{definition}
\newtheorem{theorem}{Theorem}
\newtheorem*{definition}{Definition}
\newtheorem{example}{Example}
\newcommand{\mynote}[2]{
    \fbox{\bfseries\sffamily\scriptsize#1}
    {\small$\blacktriangleright$\textsf{\emph{#2}}$\blacktriangleleft$}}}
\newcommand{\mynote}[2]{}}
\newcounter{myenumi}
\renewcommand{\themyenumi}{\textit{Class} \arabic{myenumi}}
\newcommand{\fbseries}{\unskip\setBold\aftergroup\unsetBold\aftergroup\ignorespaces}
\newcommand{\setBoldness}[1]{\def\fake@bold{#1}}
\algrenewcommand\alglinenumber[1]{\footnotesize #1)}
\renewcommand{\COMMENT}[2][.5\linewidth]{%
\leavevmode\hfill{$\triangleright$~#2}}
\algnewcommand\algorithmicto{\textbf{to}}
\algnewcommand\RETURN{\State \textbf{return} }
\newtheoremstyle{assumption-style}% name of the style to be used
  {5pt}% measure of space to leave above the theorem. E.g.: 3pt
  {5pt}% measure of space to leave below the theorem. E.g.: 3pt
  {}% name of font to use in the body of the theorem    \itshape
  {1pt}% measure of space to indent
  {\bfseries \small}% name of head font
  {}% punctuation between head and body )
  { }% space after theorem head; " " = normal interword space
  {\thmname{#1}\thmnumber{ #2}\textnormal{\thmnote{ (#3)}}}
\theoremstyle{assumption-style}
\newtheorem{assumption}{Assumption}
\newtheoremstyle{property-style}% name of the style to be used
  {5pt}% measure of space to leave above the theorem. E.g.: 3pt
  {5pt}% measure of space to leave below the theorem. E.g.: 3pt
  {}% name of font to use in the body of the theorem    \itshape
  {1pt}% measure of space to indent
  {\bfseries \small}% name of head font
  {}% punctuation between head and body )
  { }% space after theorem head; " " = normal interword space
  {\thmname{#1}\thmnumber{ #2}\textnormal{\thmnote{ (#3)}}}
\theoremstyle{property-style}
\newtheorem{property}{Property}
\def\BibTeX{{\rm B\kern-.05em{\sc i\kern-.025em b}\kern-.08em
    T\kern-.1667em\lower.7ex\hbox{E}\kern-.125emX}}
\begin{document}

%\title{{\huge Efficient Consensus-Free Weight Reassignment for Atomic Storage }}
\title{Efficient Consensus-Free Weight Reassignment for \\ Atomic Storage (Extended Version)}

\author{\IEEEauthorblockN{Hasan Heydari}
\IEEEauthorblockA{\textit{ENAC, Université du Toulouse} \\
Toulouse, France\\	
heydari@enac.fr}
\and
\IEEEauthorblockN{Guthemberg Silvestre}
\IEEEauthorblockA{\textit{ENAC, Université du Toulouse} \\
Toulouse, France\\
silvestre@enac.fr}
\and
\IEEEauthorblockN{Luciana Arantes}
\IEEEauthorblockA{\textit{Sorbonne University, CNRS, Inria} \\
Paris, France\\
luciana.arantes@lip6.fr}
}

\maketitle

\begin{abstract}
	Weighted voting is a conventional approach to improving the performance of
    replicated systems based on commonly-used majority quorum systems in heterogeneous environments.
In long-lived systems, a weight reassignment protocol is required to reassign weights over time
    in order to accommodate performance variations accordingly.
The weight reassignment protocol should be consensus-free in asynchronous failure-prone
    systems because of the impossibility of solving consensus in such systems.
This paper presents an efficient consensus-free weight reassignment protocol for atomic storage systems
    in heterogeneous, dynamic, and asynchronous message-passing systems.
An experimental evaluation shows that the proposed protocol improves the performance of atomic read/write
    storage implemented by majority quorum systems compared with previous solutions.
\end{abstract}

\begin{IEEEkeywords}
	weighted voting,
	majority quorum system,
	replication,
	heterogeneous environment,
	dynamic distributed system
\end{IEEEkeywords}

\section{Introduction}\label{sec:introduction}
	The atomic read/write storage (or simply atomic storage, a.k.a. atomic register \cite{interprocessComI})
		is a fundamental building block for practical distributed storage and file systems (e.g., \cite{fab,gpfs}).
	Atomic storage allows concurrent processes, each possibly running a different algorithm,
		to share data atomically through a variable accessed by read/write (r/w) operations.
	Quorum systems \cite{quorumSystems} are a well-known abstraction for implementing atomic storage \cite{effModConsensus-freeFST}.
	A quorum system is a collection of sets called quorums such that each one is a subset of processes,
		and the intersection property that states every two quorums always intersect should be satisfied.
	By implementing atomic storage using quorum systems, atomicity can be guaranteed using the intersection property \cite{abd}.
	Moreover, it is not required to execute r/w operations in all processes;
		each r/w operation should be executed by all processes of one quorum,
		improving the system's fault tolerance and availability.

	There exist many types of quorum systems such as grids \cite{gridQS, loadCapacityAvaiQS},
		trees \cite{treeQS}, hierarchical \cite{hierarchicalQS}, and the simple majority quorum system (SMQS) \cite{readWriteQS}.
	In the SMQS, every quorum consists of a strict majority of processes.
	Most atomic storages based on quorum systems (e.g., \cite{effModConsensus-freeFST, abd, rambo}) utilize the SMQS due to its simplicity and optimal fault tolerance;
		however, the SMQS can impact both quorum latency\footnote{Quorum latency for
		a request in a quorum system is the time interval between sending
		the request (to a quorum, some quorums, or a subset of processes)
		until receiving the responses from a quorum of processes \cite{readWriteQS, minResponceTimeQS}.} and throughput \cite{readWriteQS}.
	The reason for this performance impact is that an SMQS does not consider the heterogeneity of processes or network connections.
	If it takes such heterogeneity into account, its latency and throughput are likely to be improved.

	Contrarily to SMQS, the weighted majority quorum system (WMQS) was proposed to cope with heterogeneity.
	In WMQS, each process is assigned a weight that is in accordance with the process's
		latency or throughput determined by a monitoring system \cite{AWARE, reliabilityVoting};
		every quorum consists
		of a set of processes such that the sum of their weights is greater than half of the total weight of processes in the system.
	The following example helps to grasp the difference between SMQS and WMQS in systems with heterogeneous latencies and throughput.

	\begin{example}
		Let $p_{1},p_{2},p_{3}$, and $p_{4}$ be the processes comprising the system and $c$ be a client.
		Consider the two following scenarios.
		For the first scenario, assume that the average round-trip latencies between the client and processes $p_{1},p_{2},p_{3}$, and $p_{4}$ are
			$20 ms$, $45 ms$, $100 ms$, and $140 ms$, respectively.
		In another scenario, assume that the throughput of processes $p_{1},p_{2},p_{3}$, and $p_{4}$ are
			$1000$, $800$, $400$, and $200$ operation/sec, respectively.
		Let $1.4$, $1.1$, $0.9$, $0.6$ be the assigned weights by the monitoring system to processes $p_{1},p_{2},p_{3}$, and $p_{4}$, respectively.
		The quorum latency using SMQS is $100 ms$ while using WMQS is $45 ms$ (Figure \ref{fig:smqs-vs-wmqs}).
		The throughput of the system based SMQS and WMQS is $600$ and $800$ operation/sec, respectively\footnote{Throughput is computed using \textit{quoracle} library \cite{readWriteQS}; the code written using \textit{quoracle} to compute throughput can be found in Appendix \ref{appendix-quoracle}.}).
		Both scenarios show the advantage of using the WMQS over SMQS.
		\begin{figure}[hbt!]
			 \centering
			 \begin{subfigure}[t]{0.18\textwidth}
				 \centering
				 \tikzset{every picture/.style={line width=0.75pt}} %set default line width to 0.75pt        

\begin{tikzpicture}[x=0.75pt,y=0.75pt,yscale=-1,xscale=1]
%uncomment if require: \path (0,105); %set diagram left start at 0, and has height of 105

%Straight Lines [id:da5798426738658478] 
\draw [color={rgb, 255:red, 74; green, 74; blue, 74 }  ,draw opacity=1 ]   (34.08,20.34) -- (38.3,42.13) ;
\draw [shift={(38.68,44.1)}, rotate = 259.04] [fill={rgb, 255:red, 74; green, 74; blue, 74 }  ,fill opacity=1 ][line width=0.08]  [draw opacity=0] (4.8,-1.2) -- (0,0) -- (4.8,1.2) -- cycle    ;
%Straight Lines [id:da010090449625119069]
\draw [color={rgb, 255:red, 208; green, 2; blue, 27 }  ,draw opacity=1 ]   (34.28,16.4) -- (83.56,16.4) ;
%Straight Lines [id:da9288654548774176]
\draw [color={rgb, 255:red, 208; green, 2; blue, 27 }  ,draw opacity=1 ]   (34.28,18.04) -- (34.28,14.76) ;
%Straight Lines [id:da34746476620134814]
\draw [color={rgb, 255:red, 208; green, 2; blue, 27 }  ,draw opacity=1 ]   (83.56,17.85) -- (83.56,14.58) ;

%Straight Lines [id:da25858284528877595]
\draw [color={rgb, 255:red, 74; green, 74; blue, 74 }  ,draw opacity=1 ]   (34.08,20.34) -- (45.79,58.37) ;
\draw [shift={(46.38,60.28)}, rotate = 252.9] [fill={rgb, 255:red, 74; green, 74; blue, 74 }  ,fill opacity=1 ][line width=0.08]  [draw opacity=0] (4.8,-1.2) -- (0,0) -- (4.8,1.2) -- cycle    ;
%Straight Lines [id:da25696847649479526]
\draw [color={rgb, 255:red, 74; green, 74; blue, 74 }  ,draw opacity=1 ]   (38.68,44.1) -- (43.58,22.22) ;
\draw [shift={(44.02,20.26)}, rotate = 462.61] [fill={rgb, 255:red, 74; green, 74; blue, 74 }  ,fill opacity=1 ][line width=0.08]  [draw opacity=0] (4.8,-1.2) -- (0,0) -- (4.8,1.2) -- cycle    ;
%Straight Lines [id:da8335006083552672]
\draw [color={rgb, 255:red, 74; green, 74; blue, 74 }  ,draw opacity=1 ]   (34.08,20.34) -- (70.42,87.65) ;
\draw [shift={(71.38,89.41)}, rotate = 241.63] [fill={rgb, 255:red, 74; green, 74; blue, 74 }  ,fill opacity=1 ][line width=0.08]  [draw opacity=0] (4.8,-1.2) -- (0,0) -- (4.8,1.2) -- cycle    ;
%Straight Lines [id:da22191508466152765]
\draw [color={rgb, 255:red, 74; green, 74; blue, 74 }  ,draw opacity=1 ]   (59.02,74.93) -- (82.95,22.29) ;
\draw [shift={(83.78,20.47)}, rotate = 474.45] [fill={rgb, 255:red, 74; green, 74; blue, 74 }  ,fill opacity=1 ][line width=0.08]  [draw opacity=0] (4.8,-1.2) -- (0,0) -- (4.8,1.2) -- cycle    ;
%Straight Lines [id:da14313079667670947]
\draw [color={rgb, 255:red, 74; green, 74; blue, 74 }  ,draw opacity=1 ]   (34.08,20.34) -- (58.19,73.11) ;
\draw [shift={(59.02,74.93)}, rotate = 245.45] [fill={rgb, 255:red, 74; green, 74; blue, 74 }  ,fill opacity=1 ][line width=0.08]  [draw opacity=0] (4.8,-1.2) -- (0,0) -- (4.8,1.2) -- cycle    ;
%Straight Lines [id:da21242900281909005]
\draw [color={rgb, 255:red, 74; green, 74; blue, 74 }  ,draw opacity=1 ]   (46.38,60.28) -- (56.03,22.2) ;
\draw [shift={(56.52,20.26)}, rotate = 464.22] [fill={rgb, 255:red, 74; green, 74; blue, 74 }  ,fill opacity=1 ][line width=0.08]  [draw opacity=0] (4.8,-1.2) -- (0,0) -- (4.8,1.2) -- cycle    ;
%Straight Lines [id:da9016023469148624]
\draw    (24.8,75.24) -- (108.8,75.24) ;
%Straight Lines [id:da40416842975072576]
\draw    (24.8,44.74) -- (108.8,44.74) ;
%Straight Lines [id:da7276079845546097]
\draw    (24.8,60.47) -- (108.8,60.47) ;
%Straight Lines [id:da8534491008494227]
\draw    (24.8,20.21) -- (108.8,20.21) ;
%Straight Lines [id:da21590948706765478]
\draw    (24.8,89.93) -- (108.8,89.93) ;
%Straight Lines [id:da5098323519885608]
\draw [color={rgb, 255:red, 74; green, 74; blue, 74 }  ,draw opacity=1 ]   (71.38,89.41) -- (101.53,22.21) ;
\draw [shift={(102.35,20.39)}, rotate = 474.17] [fill={rgb, 255:red, 74; green, 74; blue, 74 }  ,fill opacity=1 ][line width=0.08]  [draw opacity=0] (4.8,-1.2) -- (0,0) -- (4.8,1.2) -- cycle    ;

% Text Node
\draw (15.78,20.63) node [anchor=east] [inner sep=0.75pt]  [font=\fontsize{0.88em}{1.06em}\selectfont]  {$c$};
% Text Node
\draw (6.78,85.72) node [anchor=north west][inner sep=0.75pt]  [font=\fontsize{0.88em}{1.06em}\selectfont]  {$p_{4}$};
% Text Node
\draw (6.78,70.33) node [anchor=north west][inner sep=0.75pt]  [font=\fontsize{0.88em}{1.06em}\selectfont]  {$p_{3}$};
% Text Node
\draw (6.78,54.94) node [anchor=north west][inner sep=0.75pt]  [font=\fontsize{0.88em}{1.06em}\selectfont]  {$p_{2}$};
% Text Node
\draw (6.78,40.55) node [anchor=north west][inner sep=0.75pt]  [font=\fontsize{0.88em}{1.06em}\selectfont]  {$p_{1}$};
% Text Node
\draw (40.95,4.04) node [anchor=north west][inner sep=0.75pt]  [font=\fontsize{0.88em}{1.06em}\selectfont]  {$100\ ms$};

\end{tikzpicture}
				 \caption{SMQS}
				 \label{fig:smqs}
			 \end{subfigure}
			 \hfill
			 \begin{subfigure}[t]{0.27\textwidth}
				 \centering
				 \tikzset{every picture/.style={line width=0.75pt}} %set default line width to 0.75pt        

\begin{tikzpicture}[x=0.75pt,y=0.75pt,yscale=-1,xscale=1]
%uncomment if require: \path (0,105); %set diagram left start at 0, and has height of 105

%Straight Lines [id:da5798426738658478] 
\draw [color={rgb, 255:red, 74; green, 74; blue, 74 }  ,draw opacity=1 ]   (96.08,20.34) -- (100.3,42.13) ;
\draw [shift={(100.68,44.1)}, rotate = 259.04] [fill={rgb, 255:red, 74; green, 74; blue, 74 }  ,fill opacity=1 ][line width=0.08]  [draw opacity=0] (4.8,-1.2) -- (0,0) -- (4.8,1.2) -- cycle    ;
%Straight Lines [id:da010090449625119069]
\draw [color={rgb, 255:red, 208; green, 2; blue, 27 }  ,draw opacity=1 ]   (96.28,16.4) -- (118.8,16.4) ;
%Straight Lines [id:da9288654548774176]
\draw [color={rgb, 255:red, 208; green, 2; blue, 27 }  ,draw opacity=1 ]   (96.28,18.04) -- (96.28,14.76) ;
%Straight Lines [id:da34746476620134814]
\draw [color={rgb, 255:red, 208; green, 2; blue, 27 }  ,draw opacity=1 ]   (118.8,17.85) -- (118.8,14.58) ;

%Straight Lines [id:da25858284528877595]
\draw [color={rgb, 255:red, 74; green, 74; blue, 74 }  ,draw opacity=1 ]   (96.08,20.34) -- (107.79,58.37) ;
\draw [shift={(108.38,60.28)}, rotate = 252.9] [fill={rgb, 255:red, 74; green, 74; blue, 74 }  ,fill opacity=1 ][line width=0.08]  [draw opacity=0] (4.8,-1.2) -- (0,0) -- (4.8,1.2) -- cycle    ;
%Straight Lines [id:da25696847649479526]
\draw [color={rgb, 255:red, 74; green, 74; blue, 74 }  ,draw opacity=1 ]   (100.68,44.1) -- (105.58,22.22) ;
\draw [shift={(106.02,20.26)}, rotate = 462.61] [fill={rgb, 255:red, 74; green, 74; blue, 74 }  ,fill opacity=1 ][line width=0.08]  [draw opacity=0] (4.8,-1.2) -- (0,0) -- (4.8,1.2) -- cycle    ;
%Straight Lines [id:da8335006083552672]
\draw [color={rgb, 255:red, 74; green, 74; blue, 74 }  ,draw opacity=1 ]   (96.08,20.34) -- (132.42,87.65) ;
\draw [shift={(133.38,89.41)}, rotate = 241.63] [fill={rgb, 255:red, 74; green, 74; blue, 74 }  ,fill opacity=1 ][line width=0.08]  [draw opacity=0] (4.8,-1.2) -- (0,0) -- (4.8,1.2) -- cycle    ;
%Straight Lines [id:da22191508466152765]
\draw [color={rgb, 255:red, 74; green, 74; blue, 74 }  ,draw opacity=1 ]   (121.02,74.93) -- (144.95,22.29) ;
\draw [shift={(145.78,20.47)}, rotate = 474.45] [fill={rgb, 255:red, 74; green, 74; blue, 74 }  ,fill opacity=1 ][line width=0.08]  [draw opacity=0] (4.8,-1.2) -- (0,0) -- (4.8,1.2) -- cycle    ;
%Straight Lines [id:da14313079667670947]
\draw [color={rgb, 255:red, 74; green, 74; blue, 74 }  ,draw opacity=1 ]   (96.08,20.34) -- (120.19,73.11) ;
\draw [shift={(121.02,74.93)}, rotate = 245.45] [fill={rgb, 255:red, 74; green, 74; blue, 74 }  ,fill opacity=1 ][line width=0.08]  [draw opacity=0] (4.8,-1.2) -- (0,0) -- (4.8,1.2) -- cycle    ;
%Straight Lines [id:da21242900281909005]
\draw [color={rgb, 255:red, 74; green, 74; blue, 74 }  ,draw opacity=1 ]   (108.38,60.28) -- (118.03,22.2) ;
\draw [shift={(118.52,20.26)}, rotate = 464.22] [fill={rgb, 255:red, 74; green, 74; blue, 74 }  ,fill opacity=1 ][line width=0.08]  [draw opacity=0] (4.8,-1.2) -- (0,0) -- (4.8,1.2) -- cycle    ;
%Straight Lines [id:da9016023469148624]
\draw    (19.73,75.24) -- (167.1,75.24) ;
%Straight Lines [id:da40416842975072576]
\draw    (19.73,44.74) -- (167.1,44.74) ;
%Straight Lines [id:da7276079845546097]
\draw    (19.73,60.47) -- (167.1,60.47) ;
%Straight Lines [id:da8534491008494227]
\draw    (19.73,20.21) -- (167.1,20.21) ;
%Straight Lines [id:da21590948706765478]
\draw    (19.73,89.93) -- (167.1,89.93) ;
%Straight Lines [id:da5098323519885608]
\draw [color={rgb, 255:red, 74; green, 74; blue, 74 }  ,draw opacity=1 ]   (133.38,89.41) -- (163.53,22.21) ;
\draw [shift={(164.35,20.39)}, rotate = 474.17] [fill={rgb, 255:red, 74; green, 74; blue, 74 }  ,fill opacity=1 ][line width=0.08]  [draw opacity=0] (4.8,-1.2) -- (0,0) -- (4.8,1.2) -- cycle    ;

% Text Node
\draw (12.78,20.63) node [anchor=east] [inner sep=0.75pt]  [font=\fontsize{0.88em}{1.06em}\selectfont]  {$c$};
% Text Node
\draw (3.78,85.72) node [anchor=north west][inner sep=0.75pt]  [font=\fontsize{0.88em}{1.06em}\selectfont]  {$p_{4}$};
% Text Node
\draw (3.78,70.33) node [anchor=north west][inner sep=0.75pt]  [font=\fontsize{0.88em}{1.06em}\selectfont]  {$p_{3}$};
% Text Node
\draw (3.78,54.94) node [anchor=north west][inner sep=0.75pt]  [font=\fontsize{0.88em}{1.06em}\selectfont]  {$p_{2}$};
% Text Node
\draw (3.78,40.55) node [anchor=north west][inner sep=0.75pt]  [font=\fontsize{0.88em}{1.06em}\selectfont]  {$p_{1}$};
% Text Node
\draw (91.95,4.04) node [anchor=north west][inner sep=0.75pt]  [font=\fontsize{0.88em}{1.06em}\selectfont]  {$45\ ms$};
% Text Node
\draw (25.75,32.94) node [anchor=north west][inner sep=0.75pt]  [font=\footnotesize] [align=left] {$\displaystyle weight=1.4$};
% Text Node
\draw (25.75,47.86) node [anchor=north west][inner sep=0.75pt]  [font=\footnotesize] [align=left] {$\displaystyle weight=1.1$};
% Text Node
\draw (25.75,62.78) node [anchor=north west][inner sep=0.75pt]  [font=\footnotesize] [align=left] {$\displaystyle weight=0.9$};
% Text Node
\draw (25.75,77.69) node [anchor=north west][inner sep=0.75pt]  [font=\footnotesize] [align=left] {$\displaystyle weight=0.6$};

\end{tikzpicture}
				 \caption{WMQS}
				 \label{fig:wmqs}
			 \end{subfigure}
			\caption{The quorum latency of SMQS vs. WMQS}
			\label{fig:smqs-vs-wmqs}
			\vspace{-0.3cm}
		\end{figure}
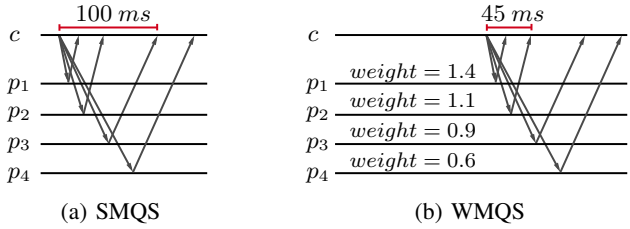
	\end{example}
	Although the WMQS
		improves the quorum latency in contrast to the SMQS,
		it has a significant drawback for real, dynamic, and long-lived systems,
		where the latencies and throughput of processes might change over time.
	Indeed, using time-invariant weights is not suitable for such systems, so the processes' weights must be reassigned over time.
	However, reassigning processes' weights is a challenging problem in dynamic asynchronous failure-prone systems due to the following requirements.

	\begin{itemize}[align=parleft,leftmargin=*]
		\item[a)] \textit{Guaranteeing the atomicity property.} Weight reassignments and r/w operations might be concurrent.
			If it is the case, some r/w operations might be performed based on the \textit{most up-to-date} weights,
				and others might be performed based on previous weights, not ensuring thus the atomicity property.
		\item[b)] \textit{Guaranteeing the liveness of atomic storage.} If one allows arbitrary weight reassignment, the system's liveness might not be guaranteed.
	For instance, in Figure \ref{fig:wmqs}, assume that process $p_{1}$'s weight is reassigned to $2.7$
		while the weights of other processes are not reassigned.
	If process $p_{1}$ fails, no quorum can be constituted, leading to the loss of the system's liveness.
		Note that the other processes' weights cannot be reassigned anymore due to the asynchrony of the system.
		Indeed,	process $p_{1}$ can be slow, and by reassigning other processes' weights, two disjoint weighted quorums might be constituted.
		\item[c)] \textit{Demanding a consensus-free and wait-free solution.} For reassigning weights, a consensus-based protocol or similar primitives cannot be used
			because it is known that consensus is not solvable in asynchronous failure-prone systems \cite{fLP}.
		Besides, the ABD protocol \cite{abd} showed that atomic storage could be implemented in static asynchronous systems
			in a wait-free \cite{wait-free} manner and without requiring consensus.
		\item[d)] \textit{Efficiency.} To have an efficient storage system,
			it is required to separate weight reassignment protocol from r/w protocols \cite{jehl2017case}.
	\end{itemize}

	 Atomic storage with consensus-based weight reassignment protocols (e.g., \cite{dynVotingAlgMaintainingConsistency, mutualExclusionDynamicVoteRessignment})
		does not satisfy the third requirement.
	Likewise, if consensus-based reconfiguration protocols (like RAMBO \cite{rambo}) are adapted to be used as a
		weight reassignment protocol, the third requirement is not satisfied.
	 Consensus-free reconfiguration protocols, like DynaStore \cite{dynAtomicStorageWithoutCons}, SpSn \cite{elasticReconf},
		and FreeStore \cite{effModConsensus-freeFST} (more protocols can be found in \cite{reconfigureASTutorial, spiegelman2017dynamic, jehl2017case}),
		proposed to change the set of
		processes that compromise the system
		by using two special functions: \textit{join} and \textit{leave}.
    Servers can join/leave the system by calling these functions.
   	Such protocols can be adapted to be used as a consensus-free solution for reassigning the processes' weights.
    To do so, one can change \textit{join} and \textit{leave} functions to \textit{increase} and \textit{decrease} functions, respectively
	such that each server can request to increase/decrease its weight using \textit{increase}/\textit{decrease} functions.
	However, such protocols might create unacceptable states in which the liveness of atomic storage cannot be guaranteed (see Example \ref{example:2}).

	\begin{example}\label{example:2}
		Let $p_{1},p_{2},p_{3}$, and $p_{4}$ be the processes comprising the system;
			also, let the initial weight of each process be one (Figure \ref{fig:unacceptable-states}).
		Two concurrent requests $increase(p_{1}, 1.4)$ and $decrease(p_{4},0.7)$ are issued by processes $p_{1}$ and $p_{4}$, respectively,
			to increase $p_{1}$'s weight by 1.4 and to decrease $p_{4}$'s weight by 0.7.
		Each request creates an intermediate (auxiliary) state.
		Although each of created intermediate states is acceptable, their combination is unacceptable
			because the system might not be live (consider the case when process $p_{1}$ fails).
		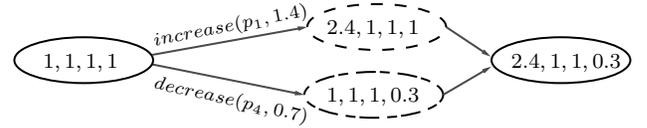
\begin{figure}[hbt!]
			 \centering
			 \tikzset{every picture/.style={line width=0.75pt}} %set default line width to 0.75pt        

\begin{tikzpicture}[x=0.75pt,y=0.75pt,yscale=-1,xscale=1]
%uncomment if require: \path (0,105); %set diagram left start at 0, and has height of 105

%Straight Lines [id:da5098323519885608]
\draw [color={rgb, 255:red, 74; green, 74; blue, 74 }  ,draw opacity=1 ]   (70.18,52.72) -- (145.22,67.76) ;
\draw [shift={(147.18,68.15)}, rotate = 191.33] [fill={rgb, 255:red, 74; green, 74; blue, 74 }  ,fill opacity=1 ][line width=0.08]  [draw opacity=0] (4.8,-1.2) -- (0,0) -- (4.8,1.2) -- cycle    ;
%Shape: Ellipse [id:dp3453759562133134]
\draw   (0.93,50.65) .. controls (0.93,44.26) and (16.6,39.08) .. (35.93,39.08) .. controls (55.26,39.08) and (70.93,44.26) .. (70.93,50.65) .. controls (70.93,57.04) and (55.26,62.22) .. (35.93,62.22) .. controls (16.6,62.22) and (0.93,57.04) .. (0.93,50.65) -- cycle ;
%Shape: Ellipse [id:dp49798784709148025]
\draw  [dash pattern={on 4.5pt off 4.5pt}] (148.68,33.9) .. controls (148.68,27.51) and (164.35,22.33) .. (183.68,22.33) .. controls (203.01,22.33) and (218.68,27.51) .. (218.68,33.9) .. controls (218.68,40.29) and (203.01,45.47) .. (183.68,45.47) .. controls (164.35,45.47) and (148.68,40.29) .. (148.68,33.9) -- cycle ;
%Shape: Ellipse [id:dp18761048639766353]
\draw  [dash pattern={on 3.75pt off 3pt on 7.5pt off 1.5pt}] (147.18,68.15) .. controls (147.18,61.76) and (162.85,56.58) .. (182.18,56.58) .. controls (201.51,56.58) and (217.18,61.76) .. (217.18,68.15) .. controls (217.18,74.54) and (201.51,79.72) .. (182.18,79.72) .. controls (162.85,79.72) and (147.18,74.54) .. (147.18,68.15) -- cycle ;
%Straight Lines [id:da39330694232567587]
\draw [color={rgb, 255:red, 74; green, 74; blue, 74 }  ,draw opacity=1 ]   (70.43,47.97) -- (146.71,34.26) ;
\draw [shift={(148.68,33.9)}, rotate = 529.81] [fill={rgb, 255:red, 74; green, 74; blue, 74 }  ,fill opacity=1 ][line width=0.08]  [draw opacity=0] (4.8,-1.2) -- (0,0) -- (4.8,1.2) -- cycle    ;
%Shape: Ellipse [id:dp27305003762080404]
\draw   (241.6,50.65) .. controls (241.6,44.26) and (257.27,39.08) .. (276.6,39.08) .. controls (295.93,39.08) and (311.6,44.26) .. (311.6,50.65) .. controls (311.6,57.04) and (295.93,62.22) .. (276.6,62.22) .. controls (257.27,62.22) and (241.6,57.04) .. (241.6,50.65) -- cycle ;
%Straight Lines [id:da2203773747616833]
\draw [color={rgb, 255:red, 74; green, 74; blue, 74 }  ,draw opacity=1 ]   (218.68,33.9) -- (239.53,47.77) ;
\draw [shift={(241.2,48.88)}, rotate = 213.63] [fill={rgb, 255:red, 74; green, 74; blue, 74 }  ,fill opacity=1 ][line width=0.08]  [draw opacity=0] (4.8,-1.2) -- (0,0) -- (4.8,1.2) -- cycle    ;
%Straight Lines [id:da29261302149881296]
\draw [color={rgb, 255:red, 74; green, 74; blue, 74 }  ,draw opacity=1 ]   (217.18,68.15) -- (239.81,54.89) ;
\draw [shift={(241.53,53.88)}, rotate = 509.62] [fill={rgb, 255:red, 74; green, 74; blue, 74 }  ,fill opacity=1 ][line width=0.08]  [draw opacity=0] (4.8,-1.2) -- (0,0) -- (4.8,1.2) -- cycle    ;

% Text Node
\draw (14,46) node [anchor=north west][inner sep=0.75pt]  [font=\footnotesize] [align=left] {$\displaystyle 1,1,1,1$};
% Text Node
\draw (157,29) node [anchor=north west][inner sep=0.75pt]  [font=\footnotesize] [align=left] {$\displaystyle 2.4,1,1,1$};
% Text Node
\draw (157,63) node [anchor=north west][inner sep=0.75pt]  [font=\footnotesize] [align=left] {$\displaystyle 1,1,1,0.3$};
% Text Node
\draw (68.62,35.2) node [anchor=north west][inner sep=0.75pt]  [font=\scriptsize,rotate=-349.06] [align=left] {$\displaystyle increase( p_{1} ,1.4)$};
% Text Node
\draw (71.48,56.08) node [anchor=north west][inner sep=0.75pt]  [font=\scriptsize,rotate=-12.6] [align=left] {$\displaystyle decrease( p_{4} ,0.7)$};
% Text Node
\draw (250,46) node [anchor=north west][inner sep=0.75pt]  [font=\footnotesize] [align=left] {$\displaystyle 2.4,1,1,0.3$};

\end{tikzpicture}
			\caption{An example to show that reconfiguration protocols (like DynaStore, SpSn, and FreeStore) might
				create unacceptable states if they are adapted to be used as a weight reassignment protocol.
				Each oval (resp. dashed oval) is a state (resp. an intermediate state); $i$th number of each state determines $p_{i}$'s weight, where $i \in \{1,2,3,4\}$.}
			\label{fig:unacceptable-states}
	\vspace{-0.3cm}
		\end{figure}
	\end{example}

	SmartMerge \cite{smartmerge} is the only consensus-free reconfiguration protocol that avoids creating unacceptable states;
		however, for each r/w operation, it requires communicating with a quorum of
		processes to find the most up-to-date configuration.
	Therefore, it might incur significant performance losses in terms of latency and throughput, i.e.,
		it does not satisfy the fourth requirement since the r/w operations are not completely separated from the reconfiguration protocol.
	If SmartMerge is adapted to be used as a weight reassignment protocol, we still have the same problem.

	To the best of our knowledge, no protocol is presented explicitly to solve weight reassignment
		in a consensus-free manner for atomic storage.
	This paper presents a novel and efficient consensus-free weight reassignment protocol that
		autonomously reassigns the processes' weights for atomic storage (the atomic storage is based on the ABD protocol).
	The weight reassignment protocol avoids creating unacceptable states.
	In contrast to other solutions, the distinguishing feature of our protocol is that for executing each r/w operation,
		it is not required to communicate with a quorum of processes to find the most up-to-date processes' weights leading to efficiency improvement.
%	We focus on the weight reassignment algorithm takes into account the latency measurement from the monitoring system
%		to improve the quorum latency of the system .
	To evaluate the performance of an atomic storage based
		on our weight reassignment protocol,
		we compared our approach to atomic storages based on
		(1) the (static) ABD that uses an SMQS,
		(2) RAMBO, and
		(3) SmartMerge.
	Our experimental results show that our approach is
		38$\%$, 17$\%$, 27$\%$ more efficient than
		the (static) ABD, RAMBO, and SmartMerge, respectively.

	\vspace{0.5em}
	\noindent\textbf{Organization of the paper.}
		Section \ref{sec:preliminaries} presents the system model and some preliminary definitions and properties used in the paper.
		In Section \ref{sec:dynamic-weight-reassignment}, we describe our weight reassignment protocol.
		We present the dynamic atomic storage that utilizes our weight reassignment protocol in Section \ref{sec:storage-system}.
		The performance evaluation is shown in Section \ref{sec:evaluation}.
%		Section \ref{sec:related-work} presents the related work.
		We present the conclusion and future work in Section \ref{sec:conclusion}.

\section{Preliminaries}\label{sec:preliminaries}
	In this section, we present the system model of our paper.
	Also, we present the preliminary definitions and properties of our weight reassignment protocol and dynamic
		atomic storage system.

	\subsection*{System Model}\label{subsec:sysmte_model}
		We consider a distributed system composed by two non-overlapping sets of processes--
			a finite set of $n$ servers $S = \{s_{1},s_{2},\dots, s_{n}\}$
			and an infinite set of clients $C = \{c_{1}, c_{2}, \dots\}$.
		Each process has a unique identifier.
		Every client or server knows the set of servers.
		Clients access the storage system provided by servers by executing r/w operations.
		The processes communicate by message passing, and the links reliably connect all pairs of processes.
		Processes are prone to crash failures.
		A process is called \textit{correct} if it is not crashed.
		The system is asynchronous, i.e., we make no assumptions about processing times or
			message transmission delays.
		However, each process has access to a local clock;
			processes' local clocks are not synchronized and do not have any bounds on their drifts, being nothing more than counters that keep increasing.
%		However, there is a real-time clock not accessed by processes, only used in definitions, properties, and proofs.
		The interactions between processes are assumed to take place over a timespan $\mathcal{T} \subset \mathbb{R}^{+}$ called the lifetime of the system.

	\subsection*{Weight Reassignment Definitions and Properties}
		\noindent\textbf{Views.}
		During the system's lifetime, a sequence of views $\sigma = \langle v_{0}, v_{1}, \dots \rangle$ is installed in the system
			to reassign servers' weights.
		The system starts in view $v_{0}$ called the \textit{initial view}.
		The successor (resp. predecessor) of any view $v_{k}$ for $0 \leq k$ (resp. $1 \leq k$)
			is $v_{k+1} = v_{k}.succ$ (resp. $v_{k-1} = v_{k}.pred$).
		We say a view $v$ is installed in the system if a few correct servers consider $v$ as their current view (see definition below).
		We denote the current view of any server $s_{i}$ by $s_{i}.cview$.
		Note that the current views of servers might be different from the installed view in the system.
		When a non-initial view $v_{k+1}$ ($0 \leq k$) is installed, we say that $v_{k+1}.pred$ was uninstalled from the system.
		At any time $t \in \mathcal{T}$, we define $lastview$ to be the last view installed in the system.
		Since $lastview$ is the last installed view in the system, $lastview.succ = \perp$.
%		Each view $v_{k}$ ($0 \leq k$) is associated with a range of
%        	timestamps denoted by $v_{k}.timestamps$ such that $v_{k}.timestamps = [k \times \Delta, (k+1)\times\Delta - 1]$,
%			where $2 \leq \Delta$ is a constant.
		The weights of servers are not reassigned during any view $v_{k}$ ($0 \leq k$)
			and might be reassigned at the time of uninstalling $v_{k}$ and installing $v_{k+1}$.
%		More precisely, the weights of servers might be reassigned in any timestamp $l \times \Delta$ ($1 \leq l$)
%			and are not reassigned from timestamps $l \times \Delta$ to $(l+1) \times \Delta - 1$.

		\vspace{0.5em}
		\noindent\textbf{Installing a view.} To install a view in the system, at least one server should request it.
		Each server $s_{i}$ can only request to install view $s_{i}.cview.succ$.
		To do so, $s_{i}$ sends a message $\langle \textbf{change\_view}, s_{i}.cview.succ \rangle$ to other servers.
		Each server $s_{j}$ sends each received message $\langle \textbf{change\_view}, v\rangle$ to other servers
			if it had not sent such a message previously.
		Then, each server $s_{i}$ sends message $\langle \textbf{state\_update}, *,*,*,s_{i}.cview, w\rangle$\footnote{We use * for a parameter when its value is not important.} to other servers,
			where the last parameter stands for $s_{i}$'s weight in $s_{i}.cview$ (we explain the algorithm for changing the views in further detail in Section \ref{sec:dynamic-weight-reassignment}).
		Each server $s_{i}$ can install view $s_{i}.cview.succ$
			as soon as receiving messages $\langle \textbf{state\_update}, *,*,*,v, w \rangle$
			from a weighted majority of servers with views $v = s_{i}.cview$.
		As soon as at least one server installs a view $v$ such that $\forall \ s \in S : s.cview \leq v$,
			we say that view $v$ is installed in the system.

		\vspace{0.5em}
		\noindent\textbf{Comparing two views.}
		We say that view $w$ is \textit{more up-to-date} than view $v$ if the
			following recursive function returns $yes$ by passing $(v, w)$ as input.
		We use the notation $v < w$ to state that view $w$ is \textit{more up-to-date} than view $v$.

		\begin{center}
			\fbox{\parbox{0.7\linewidth}{
				\begin{algorithmic}
					\STATEx{\hspace{-1em}\textbf{function} $more\_up\_to\_date(v, w)$}
						\STATE{$v \leftarrow v.succ$}
						\STATE{\textbf{if} $v = w$ \textbf{then} return $yes$}
						\STATE{\textbf{else if} $v = \perp$ \textbf{then} return $no$}
						\STATE{\textbf{else} return $more\_up\_to\_date(v, w)$}
				\end{algorithmic}
			}}
		\end{center}

		\vspace{0.5em}
		\noindent\textbf{Our protocol's assumptions and properties.} The following
			assumptions and properties are required in our weight reassignment protocol.
		From now on, $\mathbb{wl}$, $\mathbb{wu}$, $\mathbb{W_{T}}$, and $f$ state the lower bound of servers' weights,
			the upper bound of servers' weights, the total weight of servers, and the maximum number of failed servers, respectively.
		\begin{assumption}\label{property:initial-weights}
			The initial weight of each server is equal to one.
			Formally, $\forall \ s_{i} \in S : s_{i}.v_{0}.weight = 1$.
		\end{assumption}
		\begin{assumption}\label{assumption:wl-wu}
			The values of $\mathbb{wl}$ and $\mathbb{wu}$ are $n/(2\times(n-f))$ and $n/(2f)$, respectively.
		\end{assumption}
		\begin{assumption}\label{assumption:quorum-size}
			In each quorum, the total weight of servers is greater than $n/2$.
		\end{assumption}
		\begin{assumption}\label{assumption:finite-view-change}
			The number of views that are requested to be installed in the system is finite.
			Formally, $|\sigma| = m$, where $m \in \mathbb{N}$ and $\sigma$ is the sequence of installed views.
		\end{assumption}
		Assumption \ref{assumption:finite-view-change} is used in all reconfiguration protocols presented for asynchronous systems such as RAMBO, DynaStore, SpSn, FreeStore, and SmartMerge.
		Its reason is that it is impossible to reconfigure a storage system infinitely many times while guaranteeing the liveness of the storage system \cite{spiegelman2017liveness}.

		\begin{property}\label{property:total-weight}
			The total weight of servers is bounded by $n$ in any view $v \in \sigma$.
			Formally, $\forall \ v \in \sigma, \sum_{s_{i} \in S}s_{i}.v.weight \leq n$.
		\end{property}
		\begin{property}\label{property:weight-lower-upper-bound}
			The weight of each server in every view should be greater than $\mathbb{wl}$ and less than $\mathbb{wu}$.
			Formally, $\forall \ v \in \sigma, \forall \ s_{i} \in S : \mathbb{wl} < s_{i}.v.weight < \mathbb{wu}$.
		\end{property}
		If a system relies on Assumption \ref{assumption:wl-wu}, Assumption \ref{assumption:quorum-size}, Property \ref{property:total-weight}, and Property \ref{property:weight-lower-upper-bound},
			we can show that there is a quorum of correct servers, even if $f$ servers crash (in the worst case, $f$ servers crash so there are $n-f$ correct servers, each one with weight $\mathbb{wl}$;
			since $(n-f) \times \mathbb{wl} > n/2$, at least a quorum of servers can be constituted).
		Consequently, unacceptable states (like in Example \ref{example:2}) are not created.

%		In Section \ref{sec:dynamic-weight-reassignment}, by presenting an example,
%			we show how Assumption \ref{assumption:wl-wu}, Assumption \ref{assumption:quorum-size} , Property \ref{property:total-weight}, and Property \ref{property:weight-lower-upper-bound}
%			are used to avoid creating unacceptable states.
%			ensure existing a weighted majority of servers in every view even if $f$ failures occur.
%		Therefore, the system is fault-tolerant.

		\vspace{0.5em}
		\noindent\textbf{Assumptions related to WMQS.} The following
			assumptions are required to have performance gains by using WMQS.
		\begin{assumption}\label{property:fn}
			During the system's lifetime: $2f + 1 \leq n$.
	%		In other words, the system is live as long as there are at most $f$ failed processes.
		\end{assumption}
		This assumption is the same as the one used in other weight reassignment protocols, like WHEAT \cite{geoReplicatedSMR} and AWARE \cite{AWARE}.
		Indeed, this assumption states that there are a few additional spare servers, enabling the system to make progress without needing to access a majority of servers.
%
%		The reason for having Assumption \ref{property:fn} is as follows:
%			if $n = 2f + 1$ the system might reach a state that no quorum can be constituted.
%		For instance, assume that $f$ servers fail, where the weight of each failed server is $n/(n-1)$;
%			the total weight of failed servers will be $n/2$ so that no quorum can be constituted anymore
%			because the total weight of servers in any quorum is greater than $n/2$.
%		This assumption, which reduces the fault tolerance of the system and aims at improving the performance,
%			has been used both explicitly and implicitly
%			in other papers \cite{AWARE, dynVotingAlgMaintainingConsistency, mutualExclusionDynamicVoteRessignment}.
%		In this paper, for simplicity, we assume that $f = n/3$; as a result, $\mathbb{wu} < 3/2$.
		\begin{assumption}\label{property:qcnfwu}
			Constant $\mathbb{wu}$ should be defined in such a way that
%				$\mathbb{qc} \times n < n - f \times \mathbb{wu} \Rightarrow \mathbb{wu} < (n - \mathbb{qc} \times n)/f$.
				$1 \leq \mathbb{wu}$.
		\end{assumption}
		The goal of WMQS is to constitute quorums with a minority of high-weighted servers to improve performance.
		To constitute a quorum with a minority of servers, it is required that $1 \leq \mathbb{wu}$.
		The reason for having such a requirement is as follows.
		Due to Assumption \ref{assumption:quorum-size}, the total weight of servers is greater than $n/2$ in each quorum.
		To have a quorum with a minority of servers, it is necessary (but not sufficient) to have at least one server with a weight greater than equal to $1$.

		\vspace{0.5em}
		\noindent\textbf{Monitoring system.}
%		To reassign servers' weights, it is required to monitor servers' latencies.
		In order to reassign servers' weights, the latencies of the server to server and client to server communications should be monitored.
%		Each server is equipped with a local monitoring algorithm that can compute a latency score for any server.
		To this end, each server uses a local monitor module (like the one presented in AWARE \cite{AWARE}) that is responsible for evaluating and gathering information about latencies and giving scores to servers.
		We denote the latency score of server $s_{k}$ computed using the monitoring system of server $s_{i}$ by $s_{i}.lscores.s_{k}$ ($1 \leq k \leq n$).
		%The term `local' means: it might be possible that $s_{i}.lscores.s_{k} \neq s_{j}.lscores.s_{k}$, at any time $t$.
		Note that it might be possible that $s_{i}.lscores.s_{k} \neq s_{j}.lscores.s_{k}$, at any time $t$.
		Any server $s_{i}$ can compare its latency with another server $s_{j}$ using latency scores.
		For instance, from server $s_{i}$'s point of view,
			the latency of server $s_{j}$ is greater than $s_{i}$'s latency if $s_{i}.lscores.s_{i} <  s_{i}.lscores.s_{j}$.

	\subsection*{Dynamic Storage Definitions and Properties}\label{subsec:dyn_wei_storage_properties}
		\noindent\textbf{Views vs. r/w operations.} At any time $t \in \mathcal{T}$, r/w operations
			can only be executed in view $lastview$.
		At the time of uninstalling any view $v$, r/w operations are disabled on servers with view $v$.
		The operations are enabled after installing view $v.succ$.

		\begin{definition}[Atomic register \cite{interprocessComII}]\label{def:atomic-register}
			Assume two read operations $r_{1}$ and $r_{2}$ executed by correct clients.
			Consider that $r_{1}$ terminates before $r_{2}$ initiates.
			If $r_{1}$ reads a value $\alpha$ from register $R$,
				then either $r_{2}$ reads $\alpha$ or $r_{2}$ reads a more up-to-date value than $\alpha$.
		\end{definition}

		\vspace{0.5em}
		\noindent\textbf{Dynamic storage.} A dynamic storage satisfies the following properties:
			(1) the r/w protocols should implement an atomic register (Definition \ref{def:atomic-register}),
			(2) every r/w operation executed by a correct client eventually terminates,
			(3) the r/w operations that are disabled on servers to install a view will eventually be enabled,
			(4) if any server $s_{i}$ installs a non-initial view $v$, some server has requested to install view $v$,
			(5) reassigning weights are possible during the lifetime of the system, i.e.,
				the weight reassignment protocol satisfies the liveness property.

%		\begin{property}\label{property:read-write-ratio}
%			The ratio of the number of read operations to the write operations is bounded by a
%				constant denoted by $\delta$.
%		\end{property}

%		Note that if $2f + 1 = n$ there is no opportunity to reassign weights (due to Properties \ref{property:initial-weights} and \ref{property:weight-upper-bound}).
%		As a result, to have the opportunity to reassign weighs, $\mathbb{wu} \geq 1$.
%		If $n = 2f + 1$ (like traditional systems), by having $f$ or even less than $f$ failures, the system might be not live.
%		For instance, let $\mathbb{qc} = 1/2$, $wu = n/(n-1)$, $f$ processes are failed, each one with weight $wu$.
%		Then, Property \ref{property:qcnfwu} is not satisfied.
%		Note that $wu = n/(n-1)$ is a very small value, and even for such a small value, the system might have a liveness problem.
%		We have to set $n \geq 2f + 1$.
%		Now for $\mathbb{wu}$ we have: $\mathbb{wu} < n(1-\mathbb{qc}) / f$.
%		%	qc * n < n - f*wu => f*wu < n (1-qc)
%		For example, let $f = n/3$, then, $\mathbb{wu}<3/2$.
%		Indeed, by this property, we decrease the fault tolerance of the system with the hope to improve its performance.

%	\begin{property}
%		Discussion: why 2 lea Delta
%		I assumed that delta = 100. delta should be grather than equal to 2. if it is equal to 1 => the storage might be not live. becuase we just go from a view to another one.
%	\end{property}
\section{Weight Reassignment Protocol}\label{sec:dynamic-weight-reassignment}
    In this section, we describe our weight reassignment protocol.
    The protocol has two essential dependent algorithms: \textit{pairwise weight reassignment} and \textit{view changer}.
    In the following, we describe these algorithms starting with \textit{pairwise weight reassignment}.
    Then, we present the main properties of the weight reassignment protocol.

    \subsection*{Pairwise Weight Reassignment}\label{subsec:pairwise-weight-assignment}
        The default weight of servers in each view is one.
        However, for each succeeding view, any server might reassign its default weight. % by executing pairwise weight reassignments.
        As a result, each server might have different weights in distinct views.
        Each server $s_{i}$ to reassign its default weight associated with
            a view $v$ ($s_{i}.cview < v$) should participate in at least one pairwise weight reassignment.
        Pairwise weight reassignment is an algorithm in which two servers collaborate to reassign their weights associated with a view.
        Each pairwise weight reassignment $pwr$ is characterized by a quadruple $(pwr\_receiver, pwr\_sender, w, v)$ such that
            for view $v$, a server called \textit{pwr\_sender} decreases its weight associated with view $v$ by weight $w$ and sends $w$ in a message to
            another server called \textit{pwr\_receiver} that has lower latency;
            \textit{pwr\_receiver} increases its weight associated with $v$ by weight $w$ after receiving the message containing $w$.
        In this way, the servers that have lower latencies might become high-weighted servers leading to improving the performance.

        Each server should satisfy the lower and upper bounds defined in Assumption \ref{assumption:wl-wu} and Property \ref{property:weight-lower-upper-bound}.
        In other words, a server does not participate in a pairwise weight reassignment if its weight does not meet the lower and upper bounds.
        Besides, the total weight of the system is not changed by reassigning the weights in a pairwise manner;
            therefore, Property \ref{property:total-weight} can be satisfied as well.
        Consequently, unacceptable states cannot be created.

        The pseudo-code of pairwise weight reassignment can be found in Algorithm \ref{alg:pwr}.
        Each pairwise weight reassignment is started by sending a request issued by a server that wants to be the \textit{pwr\_receiver} (Lines \ref{lbl:alg:pwr:5}-\ref{lbl:alg:pwr:12}).
        For simplicity, we assume that every requested weight is equal to a constant $\epsilon$,
            i.e., for every pairwise weight reassignment $pwr = (*,*,w,*)$, $w = \epsilon$.
        Moreover, we assume that each server $s_{i}$ can participate in a pairwise weight reassignment $pwr$
            as the \textit{pwr\_receiver} if $v = s_{i}.cview.succ$, where $pwr=(s_{i}, *, *, v)$.
        In other words, each server can only be a \textit{pwr\_receiver} for its succeeding view.
        Server $s_{i}$ should meet the following conditions to be allowed to send
            a request to server $s_{j}$ to start a pairwise weight reassignment for view $v$.

        \begin{itemize}[leftmargin=*]
            \setlength{\itemindent}{1.6em}
            \item[C1R)] View $v$ should be equal to the succeeding view of server $s_{i}$.
                Formally, $s_{i}.cview.succ = v$ (Lines \ref{lbl:alg:pwr:6} and \ref{lbl:alg:pwr:12}).
            \item[C2R)] Server $s_{i}$ has not participated in any operation related to $s_{i}.cview.succ$.
                This can be ensured by a variable of Algorithm \ref{alg:change-view} (Lines \ref{lbl:alg:pwr:7}-\ref{lbl:alg:pwr:8}).
            \item[C3R)] Each server is allowed to send a request to another server that has a greater latency score.
                Formally, $s_{i}.lscores.s_{i} < s_{i}.lscores.s_{j}$ (Line \ref{lbl:alg:pwr:9}).
            \item[C4R)] Each server is allowed to send a request if its weight does not exceed the upper bound $\mathbb{wu}$ defined in Property \ref{property:weight-lower-upper-bound}.
                Formally, $get\_weight(cview.succ) + get\_requested\_weight(cview.succ)$ $+ \epsilon < \mathbb{wu}$.
                Function $get\_weight(v)$ is used to determine the weight associated with view $v$,
                and $get\_requested\_weight(v)$ is a function to determine the total weight of requested weights that their responses have not received yet for view $v$ (Line \ref{lbl:alg:pwr:10}).
        \end{itemize}

        If the above conditions are met, server $s_{i}$
            is allowed to send a request by a message
            $\langle \textbf{propose\_pwr}, s_{i}.cview.succ, \epsilon \rangle$ to sever $s_{j}$ (Line \ref{lbl:alg:pwr:12}).
        By sending this message, we say that server $s_{i}$ proposes a \textit{pairwise weight reassignment} to server $s_{j}$ for view $v = s_{i}.cview.succ$.
        Besides, server $s_{i}$ adds the pairwise weight reassignment to a set $s_{i}.pwr\_requests$ to meet C4R (Line \ref{lbl:alg:pwr:11}).
        Each server $s_{i}$ meets the following conditions by receiving any message $\langle \textbf{propose\_pwr}, v, \epsilon \rangle$ from server $s_{j}$.

        \begin{itemize}[leftmargin=*]
            \setlength{\itemindent}{1.5em}
            \item[C1S)] View $v$ should be as up-to-date as $s_{i}.cview.succ$.
                Formally, $s_{i}.cview.succ \leq v$ (Line \ref{lbl:alg:pwr:15}).
            \item[C2S)] Server $s_{i}$ has not participated in any operation related to $s_{i}.v$ (Lines \ref{lbl:alg:pwr:16}-\ref{lbl:alg:pwr:17}).
            \item[C3S)] The latency score of $s_{i}$ should be greater than $s_{j}$.
                Formally, $s_{i}.lscores.s_{j} < s_{i}.lscores.s_{i}$ (Line \ref{lbl:alg:pwr:18}).
            \item[C4S)] Each server is allowed to accept a request if its weight does not get less than $\mathbb{wl}$ to satisfy Property \ref{property:weight-lower-upper-bound}.
                Formally, $\mathbb{wl}$ $< get\_weight(view.succ)$ $- \epsilon$ (Line \ref{lbl:alg:pwr:19}).
        \end{itemize}

        If the above conditions are met, server $s_{i}$ executes a command $pwrs \leftarrow pwrs \cup \{(s_{j}, s_{i}, v, -\epsilon)\}$
            to store the pairwise weight reassignment associated with view $v$ (Line \ref{lbl:alg:pwr:20}).
        Then, server $s_{i}$ sends message $\langle \textbf{accept\_pwr}, v, \epsilon \rangle$ to server $s_{j}$ (Line \ref{lbl:alg:pwr:21}).
        By sending this message, we say that server $s_{i}$ accepts the pairwise weight reassignment.
        Moreover, the pairwise weight assignment terminates for server $s_{i}$.
        For simplicity, we omitted the part that server $s_{i}$ does not accept the pairwise weight reassignment.

        Server $s_{i}$ meets conditions C1R and C2R by receiving any message $\langle \textbf{accept\_pwr}, v, \epsilon \rangle$ from server $s_{j}$ (Lines \ref{lbl:alg:pwr:23}-\ref{lbl:alg:pwr:25}).
        If the conditions are met, server $s_{i}$ executes a command $pwrs \leftarrow pwrs \cup \{(s_{i}, s_{j}, v, \epsilon)\}$ to store the pairwise weight reassignment associated with view $v$ (Line \ref{lbl:alg:pwr:26}).
        Also, server $s_{i}$ removes the terminated pairwise weight reassignment from set $pwr\_requests$ (Line \ref{lbl:alg:pwr:27}).
        At this point, we say that the pairwise weight reassignment terminates for server $s_{i}$.

        \begin{algorithm*}[hbt!]
            \caption{Pairwise weight reassignment - server $s_{i}$}
            \label{alg:pwr}
            \small
            \begin{algorithmic}[1]
                \STATEx{\hspace{-1.65em}\textbf{variables}}
                    \STATE{$pwrs \leftarrow \emptyset$}           \label{lbl:alg:pwr:1}          \COMMENT{a set for storing pairwise weight reassignments}
                    \STATE{$pwr\_requests \leftarrow \emptyset$}    \label{lbl:alg:pwr:2}  \COMMENT{a set for storing requested pairwise weight reassignments}

%                \vspace{0.5em}
                \STATEx{\hspace{-1.65em}\textbf{functions}}
                    \STATE{$get\_weight(view) \equiv 1 + sum\big(\{w \ | \ (*,*, v, w) \in pwrs \text{ and } view = v \}\big)$ \hfill $\triangleright$ a function to compute $s_{i}.view.weight$}   %\COMMENT{}
                    \STATE{$get\_requested\_weight(view) \equiv sum\big(\{w \ | \ (*,*, v, w) \in pwr\_requests \text{ and } view = v \}\big)$}

%                \vspace{0.5em}
                \STATEx{\hspace{-1.65em}\textbf{while} $forever$}
                    \STATE{\textbf{atomic}}                        \label{lbl:alg:pwr:5}         \COMMENT{atomic execution of lines 6-12; $s_{i}$ tries to be a \textit{pwr\_receiver}}
                    \Indent
                    \vspace{-0.1cm}
                    \STATE{$cview \leftarrow get\_cview()$ of Algorithm \ref{alg:change-view}}    \label{lbl:alg:pwr:6}
                    \STATE{$dirty\_views \leftarrow get\_dirty\_views()$ of Algorithm \ref{alg:change-view}} \label{lbl:alg:pwr:7}
                    \IF{$cview.succ \notin dirty\_views$}  \label{lbl:alg:pwr:8}                                                                                                                \COMMENT{$s_{i}$ should not participate in any operation related to $cview.succ$ (C2R)}
                        \IF{$\exists \ s_{j} : s_{i}.lscores.s_{i} < s_{i}.lscores.s_{j}$} \label{lbl:alg:pwr:9}                                                                                \COMMENT{$s_{i}$ tries to find a server $s_{j}$ with a greater latency score (C3R)}
                            \IF{$get\_weight(cview.succ) + get\_requested\_weight(cview.succ) + \epsilon < \mathbb{wu}$ }                           \label{lbl:alg:pwr:10}                      \COMMENT{$s_{i}$'s weight should not exceed $\mathbb{wu}$ (C4R)}
                                \STATE{$pwr\_requests \leftarrow pwr\_requests \cup \{(s_{i}, s_{j}, cview.succ, \epsilon)\}$}                      \label{lbl:alg:pwr:11}                      \COMMENT{the sent request is stored in set $pwr\_requests$}
                                \STATE{send message $\langle \textbf{propose\_pwr}, cview.succ, \epsilon \rangle$ to server $s_{j}$}                \label{lbl:alg:pwr:12}
                            \ENDIF
                        \ENDIF
                    \ENDIF
                    \EndIndent

%                \vspace{0.5em}
                \STATEx{\hspace{-1.65em}\textbf{upon receipt of} message $\langle \textbf{propose\_pwr}, view, \epsilon \rangle$ from server $s_{j}$} \COMMENT{$s_{i}$ is the \textit{pwr\_sender}}
                    \STATE{\textbf{atomic}}                      \label{lbl:alg:pwr:13}           \COMMENT{atomic execution of lines 14-21}
                    \Indent
                    \vspace{-0.1cm}
                    \STATE{$cview \leftarrow get\_cview()$ of Algorithm \ref{alg:change-view}} \label{lbl:alg:pwr:14}
                    \IF{$cview.succ \leq view$}                 \label{lbl:alg:pwr:15}         \COMMENT{$view$ should be as up-to-date as $cview.succ$ (C1S)}
                        \STATE{$dirty\_views \leftarrow get\_dirty\_views()$ of Algorithm \ref{alg:change-view}} \label{lbl:alg:pwr:16}
                        \IF{$view \notin dirty\_views$}       \label{lbl:alg:pwr:17}        \COMMENT{$s_{i}$ should not participate in any operation related to $view$ (C2S)}
                            \IF{$s_{i}.lscores.s_{j} < s_{i}.lscores.s_{i}$}   \label{lbl:alg:pwr:18} \COMMENT{$s_{j}$ should has a lower latency than $s_{i}$ (C3S)}
                                \IF{$\mathbb{wl} < get\_weight(view.succ) - \epsilon$} \label{lbl:alg:pwr:19}                                                                                                           \COMMENT{$s_{i}$'s weight should not get less than $\mathbb{wl}$ (C4S)}
                                    \STATE{$pwrs \leftarrow pwrs \cup \{(s_{j}, s_{i}, view, -\epsilon)\}$}  \label{lbl:alg:pwr:20} \COMMENT{storing the pairwise weight reassignment}
                                    \STATE{send message $\langle \textbf{accept\_pwr}, view, \epsilon \rangle$ to server $s_{j}$} \label{lbl:alg:pwr:21}
                                \ENDIF
                            \ENDIF
                        \ENDIF
                    \ENDIF
                    \EndIndent

%                \vspace{0.5em}
                \STATEx{\hspace{-1.65em}\textbf{upon receipt of} message $\langle \textbf{accept\_pwr}, view, \epsilon \rangle$ from server $s_{j}$}    \COMMENT{$s_{i}$ is the \textit{pwr\_receiver}}
                    \STATE{\textbf{atomic}}                            \label{lbl:alg:pwr:22}     \COMMENT{atomic execution of lines 23-27}
                    \Indent
                    \vspace{-0.1cm}
                    \STATE{$cview \leftarrow get\_cview()$ of Algorithm \ref{alg:change-view}} \label{lbl:alg:pwr:23}
                    \IF{$cview.succ = view$}                       \label{lbl:alg:pwr:24}          \COMMENT{meeting Condition C1R}
                        \IF{$view \notin dirty\_views$} \label{lbl:alg:pwr:25}  \COMMENT{$s_{i}$ has not participated in any operation related to $view$}
                            \STATE{$pwrs \leftarrow pwrs \cup \{(s_{i}, s_{j}, view, \epsilon)\}$}                                            \label{lbl:alg:pwr:26}                 \COMMENT{storing the pairwise weight reassignment}
                            \STATE{$pwr\_requests \leftarrow pwr\_requests \ \backslash \ \{(s_{i}, s_{j}, view, \epsilon)\}$}   \label{lbl:alg:pwr:27}        \COMMENT{removing the terminated pwr form $pwr\_requests$}
                        \ENDIF
                    \ENDIF
                    \EndIndent
            \end{algorithmic}
        \end{algorithm*}

    \subsection*{View Changer}\label{subsec:how-to-change-view}
        Each server can request to change the installed view in the system and change its current view
            by using an algorithm called the view changer algorithm.
        Algorithm \ref{alg:change-view} is the pseudo-code of the view changer algorithm.
        In the following, we describe how this algorithm works.

        \vspace{0.5em}
        \noindent\textbf{How servers can request to change a view.}
        Each server $s_{i}$ for each view $v$ has a timeout (Line \ref{line:alg:vc-1} of Algorithm \ref{alg:change-view}).
        When the timeout of view $s_{i}.cview$ finishes, server $s_{i}$ sends a request to other servers
            to change $s_{i}.cview$ to $s_{i}.cview.succ$ and stores
            $s_{i}.cview.succ$ in a set denoted by $s_{i}.rchange\_views$
        (Lines \ref{alg:line:mod}-\ref{alg:line:send-change-view}).
        Such a request is sent by a message $\langle \textbf{change\_view}, s_{i}.cview.succ \rangle$.
        Note that in practice, such a timeout should be big enough so that the views are changed rarely to satisfy Assumption \ref{assumption:finite-view-change}.

        \vspace{0.5em}
        \noindent\textbf{How a server can change its current view.}
        Each server $s_{i}$, by receiving any message $\langle \textbf{change\_view}, view \rangle$,
             stores $view$ in a set denoted by $s_{i}.rchange\_views$ (line \ref{alg:line:change-views}).
        As soon as $s_{i}.cview.succ \in s_{i}.rchange\_views$, server $s_{i}$ starts to change its view
            (line \ref{alg:line:cview.succ-in-change-views}).
        To do so, server $s_{i}$ must do the following steps:
            (S1) Sending message $\langle \textbf{change\_view}, view \rangle$ to other servers if
                it had not been sent yet.
                Set $schange\_views$ is used to store sent messages tagged with \textbf{change\_view}
                    to be sure that a message is not sent more than once (Line \ref{alg:line:end-if-schang}).
             (S2) Disabling r/w operations (Line \ref{alg:line:disable-rw}).
            (S3) Informing Algorithm \ref{alg:pwr} that some operations related to view $s_{i}.cview.succ$ are processing to safety Conditions C2R and C2S (Line \ref{alg:line:add-dirty-views}).
            (S4) Updating the states (registers) of servers (Lines \ref{line:alg-vc-read-state}-\ref{alg:line:end-update-state}).
                Each server has a register;
                    in this step, the registers of servers with view $view$ are synchronized.
                To do so, server $s_{i}$ reads its register state (say, $\varsigma$) (Line \ref{line:alg-vc-read-state}).
                Then, server $s_{i}$ sends message $\langle \textbf{state\_update}, \varsigma, s_{i}.cview, s_{i}.cview.weight \rangle$
                    to other servers (Line \ref{alg:line:start-update-state}).
                Server $s_{i}$ waits until receiving messages $\langle \textbf{state\_update}, *, s_{i}.cview, * \rangle$
                    from a weighted majority of servers (Line \ref{alg:line:wait-weighted-majority}).
                Finally, server $s_{i}$ computes and stores the new state of its register (Lines \ref{alg:line:maxts}-\ref{alg:line:set-ts-cid-val}).
            (S5) Changing the view (Line \ref{alg:line:update-view}).
            (S6) Enabling r/w operations (Line \ref{alg:line:enable-rw}).

        \begin{algorithm*}[hbt!]
            \caption{View changer - server $s_{i}$}
            \label{alg:change-view}
            \small
            \begin{algorithmic}[1]
            \STATEx{\hspace{-1.65em}\textbf{variables}}
                \STATE{$cview \leftarrow v_{0}$, set a timeout for $cview$}    \label{line:alg:vc-1}                    \COMMENT{a variable to store the current view of $s_{i}$}
                \STATE{$rchange\_views \leftarrow \emptyset$}     \COMMENT{a set to store received $change\_views$}
                \STATE{$schange\_views \leftarrow \emptyset$}     \COMMENT{a set to store sent $change\_views$}
                \STATE{$state\_updates \leftarrow \emptyset$}     \COMMENT{a set to store received $state\_updates$}
                \STATE{$dirty\_views \leftarrow \{v_{0}\}$}       \COMMENT{a set to store the views that $s_{i}$ has participated in; it is used to meet Conditions C2R and C2S in Algorithm \ref{alg:pwr}}

%            \vspace{0.5em}
            \STATEx{\hspace{-1.65em}\textbf{functions}}
                \STATE{$sum\_weights(view) \equiv sum \big( \{w \ | \ (*,*,*,*,v,w) \in state\_updates \text{ and } view = v \} \big)$}
                \STATE{$maxts(view) \equiv max \big(\{t \ | \ (*,*,t,*,v,*) \in state\_updates \text{ and } view = v \}\big)$}
                \STATE{$maxcid(view,ts) \equiv max\big(\{c \ | \ (*,*,t,c,v,*) \in state\_updates \text{ and } view = v \text{ and } ts = t \}\big)$}
                \STATE{$val\_maxts\_maxcid(view,ts,cid) \equiv \{val \ | \ (*,val,t,c,v,*) \in state\_updates \text{ and } view = v \text{ and } ts = t \text{ and } cid = c\}$}
                \STATE{$get\_cview() \equiv cview$}              \COMMENT{this function returns the current view of $s_{i}$}
                \STATE{$get\_dirty\_views() \equiv dirty\_views$} \COMMENT{this function returns the views that $s_{i}$ has participated in}

%            \vspace{0.5em}
            \STATEx{\hspace{-1.65em}\textbf{while} $forever$}
%                \STATE{$ts \leftarrow get\_ts()$ of Algorithm \ref{alg:abd-server}}
                \IF{the timeout for view $s_{i}.cview$ finishes}                                                             \label{alg:line:mod} \COMMENT{the condition that should be satisfied to request for changing the current view}    %\COMMENT{we explain how to remove this condition in Subsection \ref{subsec:read-write-protocol-discussion}}
                    \STATE{send message $\langle \textbf{change\_view}, cview.succ \rangle$ to other servers, $rchange\_views \leftarrow rchange\_views \cup \{view\}$}     \label{alg:line:send-change-view}
                    \STATE{$schange\_views \leftarrow schange\_views \cup \{cview.succ\}$}                  \COMMENT{the sent request is stored to avoid sending it again}
                \ENDIF
                \IF{$cview.succ \in rchange\_views$}                                                             \label{alg:line:cview.succ-in-change-views} \COMMENT{meeting if there is a request for changing the current view to the succeeding view}
%                        \STATEx{\hspace{-0.5em}\textbf{uninstalling view} $cview$}
                    \IF{$cview.succ \notin schange\_views$}      \label{alg:line:uninstalling-cview}                                                \COMMENT{if the request for changing the current view is not sent yet, it should be sent}
                        \STATE{send message $\langle \textbf{change\_view}, cview.succ \rangle$ to other servers}
                        \STATE{$schange\_views \leftarrow schange\_views \cup \{cview.succ\}$} \label{alg:line:end-if-schang}             \COMMENT{the sent request is stored to avoid sending it again}
                    \ENDIF
                    \STATE{disable the execution of r/w operations of Algorithm \ref{alg:abd-server}}     \label{alg:line:disable-rw} \COMMENT{r/w operations are disabled to change the view safely}
                    \STATE{$dirty\_views \leftarrow dirty\_views \cup \{cview.succ\}$}                    \label{alg:line:add-dirty-views} \COMMENT{$s_{i}$ is participated in view $cview.succ$ ...}
                    \STATEx{\hfill}                                                                                                                     \COMMENT{... hence, it is required to notify Algorithm \ref{alg:pwr} (Conditions C2R and C2S)}
                    \STATE{$(ts, cid, val) \leftarrow get\_ts\_cid\_val()$ of Algorithm \ref{alg:abd-server}}  \label{line:alg-vc-read-state}       \COMMENT{reading the current state of $s_{i}$'s register}
                    \STATE{send message $\langle \textbf{state\_update}, (val, ts, cid), cview, cview.weight \rangle$ to other servers} \label{alg:line:start-update-state}
                    \STATE{$state\_updates \leftarrow state\_updates \cup \{(s_{i}, val, ts, cid, cview, cview.weight)\}$}
                    \STATE{wait until $n/2 < sum\_weights(cview)$}    \label{alg:line:wait-weighted-majority}                \COMMENT{waiting until a weighted majority of servers with view $cview$ respond}
                    \STATE{$maxts \leftarrow max\_timestamp(cview)$, $maxcid \leftarrow max\_cid(cview,maxts)$}  \label{alg:line:maxts}
                    \STATE{$val \leftarrow val\_maxts\_maxcid(cview,maxts,maxcid)$} \label{alg:line:end-update-state}
%                    \STATE{$ts \leftarrow $ the first timestamp of $cview.succ$}                                      \label{alg:line:end-update-state} \COMMENT{the first timestamp of $cview.succ$ is equal to $\big(\floor*{maxts / \Delta} +1 \big) \times \Delta$}
                    \STATE{$set\_ts\_cid\_val(maxts, maxcid, val)$ of Algorithm \ref{alg:abd-server}}   \label{alg:line:set-ts-cid-val} \COMMENT{writing the new state of $s_{i}$'s register}
%                        \STATEx{\hspace{-0.5em}\textbf{installing view} $cview.succ$}
                    \STATE{$cview \leftarrow cview.succ$, set a timeout for $cview$}                                                           \label{alg:line:update-view}    \COMMENT{chaining the view}
%                    \STATE{enable processing messages tagged with \textbf{propose-pwr} and \textbf{accept-pwr}}
                    \STATE{enable the execution of r/w operations}                                               \label{alg:line:enable-rw}
                \ENDIF

%            \vspace{0.5em}
            \STATEx{\hspace{-1.65em}\textbf{upon receipt of} message $\langle \textbf{change\_view}, view \rangle$}
                \STATE{$rchange\_views \leftarrow rchange\_views \cup \{view\}$}                              \label{alg:line:change-views}

%            \vspace{0.5em}
            \STATEx{\hspace{-1.65em}\textbf{upon receipt of} message $\langle \textbf{state\_update}, (val, ts, cid), view, weight \rangle$ from server $s$}
                \STATE{$state\_updates \leftarrow state\_updates \cup \{(s, val, ts, cid, view, weight)\}$}
        \end{algorithmic}
        \end{algorithm*}

%        \vspace{0.5em}
%		\noindent\textbf{An example.}
        \begin{example}
            Figure \ref{fig:pwr-views} illustrates an example of executing some pairwise weight reassignments
                and the view changer algorithm.
            In this example, $S = \{s_{1},s_{2},s_{3},s_{4},s_{5}\}$.
            Server $s_{1}$ proposes a pairwise weight reassignment to server $s_{4}$ and another one
                to server $s_{5}$ for view $v_{2}$.
            Also, server $s_{2}$ proposes a pairwise weight reassignment to server $s_{5}$ for view $v_{2}$.
            The proposed pairwise weight reassignments are accepted by servers $s_{4}$ and $s_{5}$.
            At time $t$, the timeout of view $v_{1}$ for server $s_{1}$ finishes.
            Then, server $s_{1}$ sends a message tagged with $\textbf{change\_view}$ to other servers.
            After receiving server $s_{i}$'s message, to change the view, other servers send message $\textbf{change\_view}$ as well.
            Server $s_{3}$ is the first server that changes its view from $v_{1}$ to $v_{2}$ at time $t'$;
                after that, other servers change their views as well.
            The weight of servers $s_{1}$, $s_{4}$, and $s_{5}$ in view $v_{2}$ are $1+2\epsilon$, $1-\epsilon$, and $1-2\epsilon$, respectively.
            Although the pairwise weight reassignment proposed by server $s_{2}$ is accepted, it does not affect $s_{2}$'s weight
                because $s_{2}$ had participated in view $v_{2}$ at the time of receiving the accept message;
                accordingly, the total weight of servers in view $v_{2}$ is $\mathbb{W_{T}} - \epsilon$ until view $v_{2}$ is uninstalled.
            Since server $s_{3}$ has not participated in any pairwise weight reassignment, its weight in view $v_{2}$ is equal to its default weight.
            \begin{figure}
                \centering
                \input{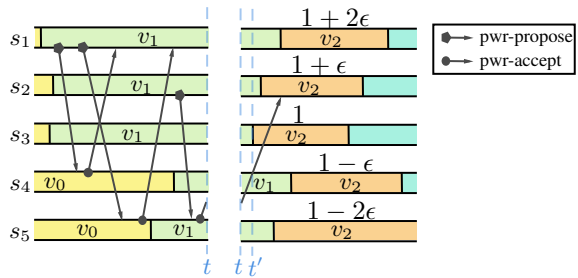}
                \caption{An example of executing some pairwise weight reassignments and the view changer algorithm.
                    The weight of each server in view $v_{2}$ is shown at the top of the view.}
                \label{fig:pwr-views}
                \vspace{-0.3cm}
            \end{figure}
        \end{example}

%        \begin{figure}
%%             \centering
%             \begin{subfigure}[t]{0.20\textwidth}
%                \centering
%                \input{figures/pwr.tex}
%%                \caption{}
%%                \label{fig:pwr-left-views}
%             \end{subfigure}
%             \begin{subfigure}[t]{0.25\textwidth}
%                \centering
%                \input{figures/pwr-new.tex}
%%                \caption{}
%%                \label{fig:pwr-views}
%             \end{subfigure}
%            \caption{An example of executing some pairwise weight reassignments and the view changer algorithm}
%            \label{fig:pwr-views}
%        \end{figure}

    \subsection*{Properties of View Changer}\label{subsec:properties}
        We present properties of Algorithm \ref{alg:change-view} in this subsection.
        These properties are used  to prove that:
        (1) for each non-initial view $v$ installed in the system, there exists at least one server that requests to install view $v$, and
        (2) the weight reassignment algorithm satisfies the liveness property.
        Moreover, these properties are used in Section \ref{sec:storage-system} to prove the correctness of the atomic storage.
        Note that all the specified lines in this subsection are related to Algorithm \ref{alg:change-view}.
        \begin{lemma}\label{lem:lem0}
            Let view $v$ be the current view of a server $s$, i.e., $s.cview = v$.
            Server $s$ installs view $v.succ$ if
                at least a weighted majority of servers including $s$ had uninstalled view $v$.
        \end{lemma}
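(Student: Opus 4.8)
The plan is to translate both hypotheses and the conclusion into statements about the control flow of Algorithm~\ref{alg:change-view}: ``$s$ installs $v.succ$'' means that $s$ eventually executes Line~\ref{alg:line:update-view} with $cview=v$ (so that afterwards $s.cview=v.succ$), and ``a server $s'$ has uninstalled $v$'' means, following the convention fixed in Section~\ref{sec:preliminaries}, that $s'$ has reached Line~\ref{alg:line:disable-rw} for the transition out of $v$. Inspecting the code, reaching that line forces $s'$ to have $cview=v$ and $v.succ\in rchange\_views$ at that moment, and then (Lines~\ref{line:alg-vc-read-state}--\ref{alg:line:start-update-state}) to broadcast a message $\langle\textbf{state\_update},*,v,w'\rangle$ with $w'=s'.v.weight$ and to insert its own matching entry into $state\_updates$.

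Next I would apply this to the hypothesised weighted majority $Q\ni s$ (weighted with respect to the view-$v$ weights, i.e. $\sum_{s'\in Q}s'.v.weight>n/2$): every $s'\in Q$ broadcasts a $\langle\textbf{state\_update},*,v,w'\rangle$ message. Since $s\in Q$, server $s$ has itself reached Line~\ref{alg:line:disable-rw} with $cview=v$, so its execution of that iteration runs straight through to Line~\ref{alg:line:start-update-state} and can then block only at the wait on Line~\ref{alg:line:wait-weighted-majority}, whose guard is $n/2<sum\_weights(cview)$ with $cview=v$. By reliability of the links, the $state\_update$ message sent by each member of $Q$ is eventually delivered to $s$ and appended to $s.state\_updates$; once those from all of $Q$ have been processed, $sum\_weights(v)\ \ge\ \sum_{s'\in Q}s'.v.weight\ >\ n/2$ because $Q$ is a weighted majority in $v$. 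Hence the wait terminates and $s$ proceeds through Lines~\ref{alg:line:maxts}--\ref{alg:line:set-ts-cid-val} and finally Line~\ref{alg:line:update-view}: $s$ installs $v.succ$ (and re-enables r/w operations on the following line).

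The one step that needs care is verifying that the weights transported by the $state\_update$ messages really are the view-$v$ weights and that they aggregate, via $sum\_weights$, to more than $n/2$. This reduces to two observations: a server still has $cview=v$ when it executes Line~\ref{alg:line:start-update-state} (it updates $cview$ only afterwards, on Line~\ref{alg:line:update-view}), so the value it transmits is exactly $s'.v.weight$; and per-view weights are immutable --- weights are only ever reassigned at the instant of installing a new view --- so these values coincide with the ones used in the definition of a weighted majority in view $v$. The remainder is a direct walk through the code, using reliable delivery and the standing assumption that $s$ is correct so that it does reach Line~\ref{alg:line:update-view}.
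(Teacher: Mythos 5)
Your proposal is correct and follows essentially the same route as the paper's own proof: both trace server $s$'s execution through Lines~\ref{alg:line:cview.succ-in-change-views}--\ref{alg:line:update-view} of Algorithm~\ref{alg:change-view} and identify the wait at Line~\ref{alg:line:wait-weighted-majority} as the only blocking point, which is released precisely because every server that uninstalled $v$ has already broadcast its $\langle\textbf{state\_update},*,v,*\rangle$ message. Your version is, if anything, slightly more careful than the paper's --- it argues the stated implication in the forward direction, and makes explicit the reliance on reliable delivery and on the immutability of per-view weights --- but these are refinements of the same argument rather than a different one.
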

        \begin{proof}
            To install view $v.succ$, server $s$ requires executing Line \ref{alg:line:update-view}.
            To do so, first, it should receive message
                $\langle \textbf{change\_view}, v.succ \rangle$.
            Then, it can execute Lines \ref{alg:line:cview.succ-in-change-views}-\ref{alg:line:start-update-state}.
            If server $s$ passes Line \ref{alg:line:wait-weighted-majority},
                it means that $s$ received messages tagged with \textbf{state\_update}
                from a weighted majority of servers $\rho$.
            Each server $s' \in \rho$ uninstalled view $v$ before sending a message tagged with
                \textbf{state\_update}.
            Therefore, server $s$ can be sure that at least a weighted majority of servers
                uninstalled view $v$ by passing Line \ref{alg:line:wait-weighted-majority}.
            Then, server $s$ can continue executing the remaining lines to change its view to
                $v.succ$.
        \end{proof}

        \begin{lemma}\label{lem:lem01}
            There is only one installed view in the system.
            In other words, if a view $w$ is installed in the system,
                any previously installed view $v < w$ was uninstalled and will not be installed anymore.
        \end{lemma}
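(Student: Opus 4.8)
The plan is to reduce the statement to two elementary facts about Algorithm~\ref{alg:change-view}: each server's current view only moves \emph{forward} along the sequence $\sigma$, and it moves exactly one step at a time. First I would observe that the only line that modifies $cview$ is Line~\ref{alg:line:update-view}, which performs $cview \leftarrow cview.succ$ and is reached only when $cview.succ \in rchange\_views$ (Line~\ref{alg:line:cview.succ-in-change-views}); consequently, once $s.cview \geq v$ holds it holds at every later time, and a server reaches $v_k$ only after having reached $v_0, v_1, \dots, v_{k-1}$ in that order, with no skipping. (From the proof of Lemma~\ref{lem:lem0} one also knows that executing Line~\ref{alg:line:update-view} for the transition $v_{k-1}\to v_k$ requires a weighted majority of servers to have uninstalled $v_{k-1}$, via Line~\ref{alg:line:wait-weighted-majority}; I may invoke this for the ``was uninstalled'' direction, though the ordering argument below already suffices.)

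Next I would introduce, for each $v_k \in \sigma$, the instant $I_k$ of the first event at which some server sets $cview = v_k$ (with $I_k = \infty$ if no server ever does, which is consistent with Assumption~\ref{assumption:finite-view-change}). Monotonicity and no-skipping give $I_0 < I_1 < I_2 < \cdots$: the server realizing $I_k$ had $cview = v_{k-1}$ strictly earlier, at a time that is $\geq I_{k-1}$. I would then show that, per the definition of an installed view, $v_k$ is installed in the system precisely during $[I_k, I_{k+1})$: at $I_k$ some server has just set $cview = v_k$ and, since no server can have exceeded $v_k$ before $I_{k+1}$, we have $\forall s \in S : s.cview \leq v_k$ at that instant; and for every $t \geq I_{k+1}$ some server has $cview = v_{k+1} > v_k$ and keeps it, so the predicate $\forall s \in S : s.cview \leq v_k$ fails. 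Because the $I_k$ are strictly increasing, these intervals are nonempty, consecutive, and pairwise disjoint, so at every instant exactly one view of $\sigma$ is installed in the system.

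Finally I would conclude: if $w = v_k$ is installed in the system then $I_k$ has occurred, hence so have $I_1, \dots, I_k$, and in particular $I_{j+1}$ for every $j < k$; thus $v_{j+1} = v_j.succ$ was installed in the system at time $I_{j+1} \leq I_k$, which by definition means $v = v_j$ was uninstalled, and by the interval characterization $v_j$ is never installed again after $I_{j+1}$. This yields both assertions, and in particular that only one view is installed in the system at any time.

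I expect the main obstacle to be definitional bookkeeping rather than a substantive argument: one must be careful that ``installed in the system'' is treated as an instantaneous predicate, that current views genuinely never skip a successor nor move backward (so that the $I_k$ are strictly increasing and their intervals tile the timeline), and that views $v_k$ never reached by any server are handled cleanly by the $I_k = \infty$ convention. Once those points are nailed down, everything follows directly from inspecting Line~\ref{alg:line:update-view} together with Line~\ref{alg:line:wait-weighted-majority}.
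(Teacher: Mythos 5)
Your proof is correct, but it takes a genuinely different route from the paper's. The paper argues by contradiction and leans on Lemma~\ref{lem:lem0}: to install $w$, a weighted majority of servers must have uninstalled $w.pred$, and descending inductively through the predecessors one concludes every $x < w$ (in particular $v$) was uninstalled by a weighted majority; the ``never installed again'' part is then an indexing argument on the positions of $v$ and $w$ in the installation order. You instead bypass the quorum machinery entirely and work from two elementary invariants of Algorithm~\ref{alg:change-view} --- $cview$ is monotone and advances one successor at a time --- to get strictly increasing first-installation instants $I_0 < I_1 < \cdots$ and the characterization that $v_k$ is installed exactly on $[I_k, I_{k+1})$, from which both claims fall out of the definition of installation/uninstallation. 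Your argument is more elementary and arguably more informative (the intervals tile the timeline, so ``exactly one installed view at any time'' is explicit), and it targets the system-level notion of ``uninstalled'' more directly than the paper does. The one thing to note is that the paper's proof deliberately establishes the stronger, operational fact that a weighted majority of servers had individually uninstalled $v$, and it is precisely this fact that Lemma~\ref{lem:s1} later extracts when it cites Lemma~\ref{lem:lem01} to argue quorum disjointness; your proof as written does not yield that, so you would indeed need the parenthetical appeal to Lemma~\ref{lem:lem0} (Line~\ref{alg:line:wait-weighted-majority}) if this lemma is to carry the same downstream weight.
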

        \begin{proof}
            We will argue by contradiction.
            Assume that there are at least two installed views $v$ and $w$ in the system,
                where $v \neq w$.
            Without loss of generality we can assume $v < w$.
            To install $w$ in the system, according to the definition of
                installing a view, at least one server $s$ should install view $w$.
            Due to Lemma \ref{lem:lem0}, at least a weighted majority of servers with view $w.pred$
                including $s$ had uninstalled view $w.pred$ to allow $s$ for installing view $w$.
            Similarly, we can show that at least a weighted majority of servers with view $w.pred.pred$
                including $s$ had uninstalled view $w.pred.pred$.
            With the same argument, we can show that any view $x$ less than $w$ had uninstalled by at least
                a weighted majority of servers with view $x$
                including $s$.
            Therefore, $v$ was uninstalled by at least a weighted majority of servers with view $v$.

            Assume that $v$ was the $i$\textsuperscript{th} installed view in the system and
                $w$ is the $j$\textsuperscript{th} installed view in the system, where $i < j$.
            To install a new view after view $w$, the index of the new view should be greater than $j$.
            Therefore, view $v$ will not be installed anymore.
        \end{proof}

        \begin{lemma}\label{lem:lem1}
            Let $s$ and $s'$ be two correct servers such that:
                (1) $s.cview = v_{j}$,
                (2) $s'.cview = v_{i}$,
                (3) $v_{i} < v_{j}$,
                (4) after uninstalling view $v_{i}$, the sequence of views installed by server $s$ is $\langle v_{i+1}, \dots , v_{j-1}, v_{j} \rangle$, and
                (5) the current view of server $s'$ is less than equal to the current views of
                    all correct servers.
            Server $s'$ installs the same sequence of views $\langle v_{i+1}, \dots , v_{j-1}, v_{j} \rangle$ eventually.
        \end{lemma}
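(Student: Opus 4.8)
The plan is to prove the statement by induction on $k$, establishing that for every $k$ with $i \le k \le j$ the server $s'$ eventually reaches the state $s'.cview = v_{k}$. Because in Algorithm \ref{alg:change-view} a server assigns to $cview$ only at Line \ref{alg:line:update-view}, and then only its immediate successor, the views $s'$ passes through between $v_{i}$ and $v_{j}$ are forced to be $v_{i+1}, \dots, v_{j-1}$, in that order; hence the case $k = j$ yields the lemma. The base case $k = i$ is hypothesis~(2).

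For the inductive step, assume $s'$ eventually has $s'.cview = v_{k}$ for some $i \le k < j$. By hypothesis~(4), server $s$ installed view $v_{k+1}$ at some earlier moment, i.e., it executed the block of Lines \ref{alg:line:cview.succ-in-change-views}--\ref{alg:line:enable-rw} while transitioning out of $v_{k}$. I use this in two ways. First, at Lines \ref{alg:line:uninstalling-cview}--\ref{alg:line:end-if-schang} server $s$ ensures $v_{k+1} \in schange\_views$, so $s$ has broadcast $\langle \textbf{change\_view}, v_{k+1} \rangle$; since $s$ is correct and the links are reliable, $s'$ eventually has $v_{k+1} \in rchange\_views$, and, as that set only grows, it holds from then on. Second, by Lemma \ref{lem:lem0} applied to $s$ installing $v_{k+1}$, a weighted majority $\rho$ of servers with view $v_{k}$ had uninstalled $v_{k}$; as in the proof of that lemma, each $s'' \in \rho$ broadcast a message $\langle \textbf{state\_update}, *, v_{k}, w'' \rangle$ carrying its weight $w''$ in view $v_{k}$ (Line \ref{alg:line:start-update-state}), and $\sum_{s'' \in \rho} w'' > n/2$, since that is exactly what let $s$ pass Line \ref{alg:line:wait-weighted-majority}. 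By reliability of the links these broadcast messages are all eventually delivered to $s'$, which records the same weight values, so eventually $n/2 < sum\_weights(v_{k})$ holds at $s'$.

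Combining the two facts: once $s'.cview = v_{k}$ and both hold, on the next iteration of its main loop $s'$ finds $cview.succ = v_{k+1} \in rchange\_views$ at Line \ref{alg:line:cview.succ-in-change-views} and enters the view-change block. Every statement there is non-blocking except the wait at Line \ref{alg:line:wait-weighted-majority}, whose guard we have just shown becomes true; hence $s'$ reaches Line \ref{alg:line:update-view} and sets $s'.cview = v_{k+1}$, which closes the induction.

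The main obstacle is exactly that second fact: arguing that the wait at Line \ref{alg:line:wait-weighted-majority} terminates at $s'$. The subtlety is that the weighted majority of view-$v_{k}$ \textbf{state\_update} messages once gathered by $s$ must still reach $s'$, although the servers that sent them may by now have moved on to later views or crashed; this is where the reliability of the links is indispensable (the guard at Line \ref{alg:line:wait-weighted-majority} is evaluated against delivered messages, so none may be lost), together with hypothesis~(4) (so that Lemma \ref{lem:lem0} supplies such a weighted majority for each of $v_{i}, \dots, v_{j-1}$) and hypothesis~(5) (so that $s'$, being behind no correct server, can consistently replay these transitions starting from $v_{i}$). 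A routine side point is that the weight a server reports for $v_{k}$ is fixed at the instant it uninstalls $v_{k}$, so the sum tested at $s'$ equals the one tested at $s$.
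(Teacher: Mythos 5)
Your proof is correct and reaches the same conclusion, but it is organized and justified differently from the paper's. The paper argues directly at the level of view $v_i$: it observes that $s$ broadcast $\langle \textbf{change\_view}, v_{i+1}\rangle$ and $\langle \textbf{state\_update},*,*,*,v_{i},*\rangle$, asserts (as ``straightforward'') that all correct servers at $v_i$ uninstall $v_i$ and can only move to $v_{i+1}$, and then discharges the wait at Line \ref{alg:line:wait-weighted-majority} by invoking Assumption \ref{assumption:wl-wu} and Property \ref{property:weight-lower-upper-bound}: even after $f$ crashes the \emph{correct} servers alone constitute a weighted majority, and each of them sends the required \textbf{state\_update} for $v_i$. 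You instead run an explicit induction on $k$ and discharge the wait by a relay argument: the very majority $\rho$ whose \textbf{state\_update} messages let $s$ pass Line \ref{alg:line:wait-weighted-majority} also broadcast those messages to $s'$, so reliable links deliver them to $s'$ even if members of $\rho$ have since crashed or advanced. Your version buys a cleaner account of why $s'$ installs exactly the sequence $\langle v_{i+1},\dots,v_j\rangle$ and makes the per-step termination argument concrete, which the paper leaves implicit. The one caveat is that your relay step leans on the stronger reading of ``reliable links'' in which a message already sent to a correct receiver is delivered regardless of whether the sender subsequently crashes; under the weaker quasi-reliable reading (delivery guaranteed only between correct processes) your appeal to $\rho$ would not go through, and you would need to fall back on exactly the paper's argument --- that the correct servers with view $v_k$ themselves form a weighted majority by Assumption \ref{assumption:wl-wu} and Property \ref{property:weight-lower-upper-bound} and each eventually sends $\langle \textbf{state\_update},*,*,*,v_{k},*\rangle$. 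Since that substitution is immediate, this is a robustness remark rather than a gap.
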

        \begin{proof}
            When $s.cview =  v_{i}$, server $s$ requires executing Line
                \ref{alg:line:cview.succ-in-change-views}-\ref{alg:line:update-view}
                to install view $v_{i+1}$.
            By executing those lines,
                it is guaranteed that two messages are sent by server $s$:
                message $\langle \textbf{change\_view}, v_{i+1} \rangle$
                and message $\langle \textbf{state\_update},*,*,*,v_{i},* \rangle$.
            In addition to server $s'$, there might be some other servers with view $v_{i}$.
            Since server $s$ is correct,
                all correct servers including server $s'$ with view $v_{i}$
                receive message $\langle \textbf{change\_view}, v_{i+1} \rangle$ eventually.
            To finish the proof, we need to show the following cases.
            \begin{itemize}[leftmargin=*]
                \setlength{\itemindent}{0.3em}
                \item[1)] All correct servers including server $s'$ with current view $v_{i}$
                    uninstall view $v_{i}$ and send message $\langle \textbf{state\_update},*,*,*,v_{i},* \rangle$.
                \item[2)] All correct servers including server $s'$ with current view $v_{i}$
                    can only change their views to $v_{i+1}$.
                \item[3)] Server $s'$
                    can pass Line \ref{alg:line:wait-weighted-majority}, i.e.,
                    server $s'$ receives messages
                    $\langle \textbf{state\_update},*,*,*,v_{i},* \rangle$
                    from a weighted majority of servers
                    (then server $s'$ can continue executing Lines \ref{alg:line:maxts}-\ref{alg:line:update-view}
                        to change its view to $v_{i+1}$.)
            \end{itemize}
            The first two cases are straightforward.
            Regarding the third case, according to Assumption \ref{assumption:wl-wu} and Property \ref{property:weight-lower-upper-bound},
                even if $f$ servers fail (each of them with weight $\mathbb{wu}$), there might be a weighted majority of
                correct servers that sent message $\langle \textbf{state\_update},*,*,*,v_{i},* \rangle$.
            Therefore, the last case eventually occurs.
        \end{proof}

        \begin{lemma}\label{lem:lem2}
            Let $lastview = v_{k}$ at time $t$ such that $v_{k}$ is the $k+1$\textsuperscript{th} view installed in the system, where $0 \leq k$.
            There is only one sequence of views $\sigma = \langle v_{0}, v_{1}, \dots, v_{k-1}, v_{k} \rangle$ from $v_{0}$ to $v_{k}$ such that
                $v_{i} = v_{i-1}.succ$ for any $1 \leq i \leq k$.
        \end{lemma}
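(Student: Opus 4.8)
The plan is to argue by induction on $k$, with Lemma \ref{lem:lem0} and Lemma \ref{lem:lem01} as the main tools. Note first that once $v_0$ is fixed and one insists on $v_i = v_{i-1}.succ$ for every $1 \le i \le k$, the entire sequence is forced, since $succ$ assigns to each view a unique successor; hence the real content of the lemma is that the $(k+1)$\textsuperscript{th} view installed in the system is exactly the view obtained from $v_0$ by applying $succ$ $k$ times. Accordingly, the induction hypothesis will be stated as: if at some time the last view installed in the system is the $k$\textsuperscript{th} installed view $u$, then there is a unique chain $\langle v_0, \dots, v_{k-1} \rangle$ with $v_{k-1} = u$ and $v_i = v_{i-1}.succ$ for $1 \le i \le k-1$. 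The base case $k = 0$ is immediate: $lastview = v_0$, the initial view, which is unique by definition, so $\sigma = \langle v_0 \rangle$ is the only sequence of the required form.

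For the inductive step, let $v_k = lastview$ at time $t$ be the $(k+1)$\textsuperscript{th} installed view, with $k \ge 1$. By the definition of a view being installed in the system, some server $s$ executed Line \ref{alg:line:update-view} to install $v_k$, and immediately before doing so it had $s.cview = v_k.pred$ (this is also what Lemma \ref{lem:lem0} forces: to install $v_k = v_k.pred.succ$, a weighted majority of servers including $s$ must have uninstalled $v_k.pred$). The first thing I would establish is that $v_k.pred$ was itself installed in the system: consider the first server to set its current view to $v_k.pred$ (such a server exists because the view changer only ever advances $cview$ to $cview.succ$, one step at a time, so $s$ passed through $v_0, v_0.succ, \dots, v_k.pred$). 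At that instant no server can hold a view later than $v_k.pred$, since reaching any later view requires having previously passed through $v_k.pred$, contradicting minimality; hence $\forall s' \in S : s'.cview \le v_k.pred$ holds at that moment, so $v_k.pred$ is installed in the system.

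Next I would pin down which installed view $v_k.pred$ is. Let $u$ be the $k$\textsuperscript{th} installed view. By Lemma \ref{lem:lem01} the installed views are totally ordered by $<$ and $u < v_k$. Since $v_k.pred$ is installed and $v_k.pred < v_k$, we get $v_k.pred \le u$; and since there is no view strictly between $v_k.pred$ and $v_k$ (any $x$ with $v_k.pred < x$ satisfies $x \ge v_k.pred.succ = v_k$, so $x < v_k$ is impossible) and $u < v_k$, we also get $u \le v_k.pred$. Therefore $u = v_k.pred$, so writing $v_{k-1} := u$ we have $v_k = v_{k-1}.succ$. Applying the induction hypothesis at the time when $v_{k-1}$ was the last installed view gives the unique chain $\langle v_0, \dots, v_{k-1} \rangle$, and appending $v_k = v_{k-1}.succ$ yields the unique chain $\langle v_0, \dots, v_k \rangle$, completing the induction.

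The step I expect to be the main obstacle is the claim in the second paragraph that $v_k.pred$ must itself have been \emph{installed in the system}, and not merely held transiently as a current view by the server that advanced past it. This is precisely where the one-step-at-a-time behavior of the view changer (Line \ref{alg:line:update-view} sets $cview \leftarrow cview.succ$, never skipping) together with Lemma \ref{lem:lem0} is indispensable; once that is in hand, Lemma \ref{lem:lem01} supplies the rest, since it already guarantees that the installed views form a strictly increasing, non-repeating chain, so counting them is unambiguous.
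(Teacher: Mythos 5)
Your proof is correct, and it supplies an actual argument where the paper offers none: the paper's proof of Lemma~\ref{lem:lem2} is the single sentence ``Using Lemmas~\ref{lem:lem01} and~\ref{lem:lem1}, the proof of this lemma is straightforward.'' Your induction on $k$ is the natural way to fill that in, and the two nontrivial steps you isolate --- that $v_k.pred$ was itself installed in the system (via the first server ever to set its current view to $v_k.pred$, using the fact that Line~\ref{alg:line:update-view} only ever advances $cview$ by one $succ$ step), and that $v_k.pred$ must coincide with the $k$\textsuperscript{th} installed view because nothing lies strictly between $v_k.pred$ and $v_k$ in the $<$ order --- are exactly the points the paper glosses over. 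The only substantive difference in route is the choice of auxiliary lemma: the paper invokes Lemma~\ref{lem:lem1} (the eventual-catch-up property of lagging servers), whereas you invoke Lemma~\ref{lem:lem0} together with the one-step-at-a-time structure of the view changer; your choice is arguably the more apt one, since Lemma~\ref{lem:lem2} is a safety statement about which views get installed and in what order, not a liveness statement about servers catching up, and Lemma~\ref{lem:lem01} already carries the total-ordering content both arguments need. No gaps.
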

        \begin{proof}
            Using Lemmas \ref{lem:lem01} and \ref{lem:lem1}, the proof of this lemma is straightforward.
        \end{proof}

        \begin{lemma}\label{lem:lem3}
            Let view $v$ be the last installed view in the system, i.e., $lastview = v$.
            Assuming a correct server $s$ requests to change view $v$ to view $v.succ$,
                then at least a weighted majority of correct servers including $s$ install $v.succ$ eventually.
        \end{lemma}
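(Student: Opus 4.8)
The plan is to establish the slightly stronger statement that \emph{every} correct server eventually installs $v.succ$, and then to note that the set of correct servers by itself constitutes a weighted majority in view $v.succ$; the lemma then follows because $s$ is one of them. First I would unwind the hypothesis: since a server may only request to change $cview$ into $cview.succ$, the assumption that the correct server $s$ requests to change $v$ into $v.succ$ forces $s.cview = v$ and that $s$ broadcasts $\langle\textbf{change\_view}, v.succ\rangle$ to all servers; by correctness of $s$ and reliability of the links, every correct server eventually receives this message, so $v.succ \in rchange\_views$ holds at every correct server eventually.

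Next I would show every correct server eventually reaches view $v$. Consider the minimum of the current views over the correct servers. Since each correct server advances its view only one step at a time (Line~\ref{alg:line:update-view}) and removing (crashed) servers from the set cannot lower the minimum, this minimum is non-decreasing; so if it never reached $v$ it would be bounded below $v$ and hence stabilize at some view $v_i < v$, leaving a correct server $s'$ permanently at $v_i$ whose current view is minimal among correct servers. Then Lemma~\ref{lem:lem1} applies with this $s'$ and the correct server $s$ (which sits at $v > v_i$): its hypothesis that the views $s$ installs after uninstalling $v_i$ are exactly $\langle v_{i+1}, \dots, v\rangle$ is supplied by the uniqueness of the view sequence (Lemma~\ref{lem:lem2}). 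Lemma~\ref{lem:lem1} then forces $s'$ to install $v_{i+1}$, contradicting stabilization at $v_i$. Hence the minimum reaches $v$, so every correct server eventually has current view $\ge v$, and by Lemma~\ref{lem:lem2} together with the one-step advancement of Line~\ref{alg:line:update-view}, each correct server is at view $v$ at some point of the run.

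To close the loop I would use the weight bounds. Every correct server that is at view $v$ has $v.succ \in rchange\_views$, so it enters the branch of Lines~\ref{alg:line:cview.succ-in-change-views}--\ref{alg:line:update-view} and, before waiting at Line~\ref{alg:line:wait-weighted-majority}, broadcasts $\langle\textbf{state\_update}, *, v, w\rangle$ where $w$ is its weight in $v$; a correct server that has already moved past $v$ has likewise already sent such a message when it uninstalled $v$. Thus all $\ge n-f$ correct servers send $\langle\textbf{state\_update}, *, v, *\rangle$, and by Property~\ref{property:weight-lower-upper-bound} with Assumption~\ref{assumption:wl-wu} the sum of their weights in $v$ exceeds $(n-f)\cdot\mathbb{wl} = n/2$; since these messages reach every correct server, each correct server eventually passes Line~\ref{alg:line:wait-weighted-majority} and executes Line~\ref{alg:line:update-view}, installing $v.succ$. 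Finally, the same weight bound in view $v.succ$ shows that the correct servers, which all install $v.succ$ and which include $s$, have total weight exceeding $n/2$, i.e.\ form a weighted majority, which is exactly the claim.

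I expect the crux to be the middle paragraph, where one must argue that the catch-up to $v$ really happens for \emph{every} correct server --- that is, carry out the ``stabilization of the minimum'' argument (equivalently, an induction on the number of correct servers still below $v$) and verify that the shape hypothesis of Lemma~\ref{lem:lem1} is precisely what Lemma~\ref{lem:lem2} guarantees. The remaining parts are routine bookkeeping with the $\mathbb{wl}/\mathbb{wu}$ bounds and with link reliability.
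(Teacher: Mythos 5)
Your proposal is correct and follows essentially the same route as the paper's proof: use Lemma~\ref{lem:lem1} to argue that all correct servers catch up to $v$, invoke Assumption~\ref{assumption:wl-wu} and Property~\ref{property:weight-lower-upper-bound} to conclude that the correct servers form a weighted majority (since each of the $n-f$ correct servers has weight strictly above $\mathbb{wl}$, their total exceeds $(n-f)\cdot\mathbb{wl}=n/2$), and then let the \textbf{state\_update} messages from that majority carry every correct server, including $s$, past Line~\ref{alg:line:wait-weighted-majority} into $v.succ$. The paper states these steps far more tersely; your version simply fills in the details it leaves implicit, notably the non-decreasing-minimum argument for the catch-up and the explicit check that Lemma~\ref{lem:lem2} supplies the shape hypothesis of Lemma~\ref{lem:lem1}.
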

        \begin{proof}
            Using Lemma \ref{lem:lem1}, all correct servers will eventually reach to view $v$.
            Then, according to Assumption \ref{assumption:wl-wu} and Property \ref{property:weight-lower-upper-bound},
                the total weight of correct servers with view $v$ is a weighted majority.
            If a correct server (say, $s$) with view $v$ sends a request to change its view to view $v.succ$,
                it will change its view due to the existence of a weighted majority of correct servers with view $v$.
            After that, all correct servers will eventually reach to view $v.succ$ by Lemma \ref{lem:lem1}, and
                the total weight of correct servers with view $v$ is a weighted majority according
                to Assumption \ref{assumption:wl-wu} and Property \ref{property:weight-lower-upper-bound}.
        \end{proof}

        \begin{lemma}\label{lem:view-change}
            Let $v$ be the last installed view in the system.
            View $v$ will eventually be changed.
        \end{lemma}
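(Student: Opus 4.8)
The plan is to show that the timeout machinery in Algorithm~\ref{alg:change-view} eventually forces a correct server to \emph{request} a change of the last installed view $v$, and then to invoke Lemma~\ref{lem:lem3}, which already guarantees that any requested change of the last installed view is carried out. So the whole argument is a short reduction to Lemma~\ref{lem:lem3} together with one genuinely model-level observation about local clocks.

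First I would establish that some correct server eventually has $cview = v$. Since $v$ is the last installed view, by the definition of view installation some server installed $v$ at a moment when every server had current view at most $v$; then, exactly as in the proof of Lemma~\ref{lem:lem3} (which uses Lemma~\ref{lem:lem1} together with Assumption~\ref{assumption:wl-wu} and Property~\ref{property:weight-lower-upper-bound} to keep a weighted majority of correct servers available at each view), every correct server eventually reaches view $v$. Fix one such correct server $s$.

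Next I would use the timeout. When $s$ set $cview \leftarrow v$ it armed a timeout for $v$ (Line~\ref{line:alg:vc-1} if $v = v_{0}$, Line~\ref{alg:line:update-view} otherwise). By the system model, a process's local clock is an unsynchronized, drift-unbounded, but monotonically increasing counter, so any fixed timeout threshold is eventually passed; hence the condition of Line~\ref{alg:line:mod} eventually becomes true at $s$ while $s.cview = v$ --- and $s$ does stay in view $v$ until then, because the only way for $s$ to move past $v$ is to execute Lines~\ref{alg:line:cview.succ-in-change-views}-\ref{alg:line:update-view}, which culminates in installing $v.succ$, i.e., in the very conclusion we are proving, so there is no circularity. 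When the timeout fires, $s$ executes Lines~\ref{alg:line:mod}-\ref{alg:line:send-change-view} and sends $\langle \textbf{change\_view}, v.succ \rangle$ to all servers; by definition this is a request by the correct server $s$ to change view $v$ to $v.succ$. Now Lemma~\ref{lem:lem3} applies verbatim: a weighted majority of correct servers including $s$ eventually install $v.succ$, so $v$ is uninstalled and $v.succ$ becomes the new last installed view; that is, view $v$ has been changed.

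The step I expect to be the crux is the timeout-firing argument: it is the only point where the proof must reach into the model rather than chain earlier lemmas, and it is essential that one leans on the \emph{unboundedness and monotonicity} of local clocks (a finite timeout is always eventually exceeded) rather than on any bound on message or processing delays, which the asynchronous model forbids. A secondary point to dispatch cleanly is that $v.succ$ must be a legitimate view the changer can request and agree upon; this is immediate from the protocol, since each server may always request $cview.succ$ and the content of $v.succ$ (its participants and their weights) is resolved during the \textbf{state\_update} round that Lemma~\ref{lem:lem3} already drives to completion.
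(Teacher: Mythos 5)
Your proof is correct and follows essentially the same route as the paper's: invoke Lemma~\ref{lem:lem3} to obtain a correct server with current view $v$, argue its timeout for $v$ eventually fires so that it sends $\langle \textbf{change\_view}, v.succ \rangle$, and then apply Lemma~\ref{lem:lem3} again to conclude the view changes. Your version is merely more explicit about why the timeout firing relies only on the unbounded monotone local clock (and why there is no circularity in assuming $s$ remains in view $v$), details the paper leaves implicit.
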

        \begin{proof}
            Due to Lemma \ref{lem:lem3}, there exists (or will eventually exist) at least a
                weighted majority of correct servers with view $v$.
            The timeout of at least one of those servers (say, $s$) for view $v$
                will eventually finish (Line \ref{alg:line:mod} of Algorithm \ref{alg:change-view});
                then $s$ will send message $\langle \textbf{change\_view}, v.succ \rangle$ to other servers.
            Since server $s$ is correct, the view will eventually be changed due to Lemma \ref{lem:lem3}.
        \end{proof}

        \begin{theorem}
            The algorithm is live, i.e.,
                servers can change their weights over time.
        \end{theorem}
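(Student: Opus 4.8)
The plan is to show that the weight reassignment layer never becomes permanently stuck: whenever a correct server wishes to reassign its weight, it eventually succeeds. A server changes its weight only by recording a tuple produced by a completed pairwise weight reassignment bound to some succeeding view (Algorithm~\ref{alg:pwr}, Lines~\ref{lbl:alg:pwr:20} and~\ref{lbl:alg:pwr:26}), and such a tuple affects the effective weights only once the corresponding view is installed. So I would reduce the statement to two parts: (i) views keep being installed in the system forever, giving an unbounded supply of succeeding views against which reassignments can be registered; and (ii) for each such view, a correct server that wants to raise its weight can carry a pairwise weight reassignment to completion.

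For (i), I would iterate Lemma~\ref{lem:view-change}: the last installed view is always eventually changed, so the system installs $v_{0}, v_{1}, v_{2}, \dots$ without bound. (This is liveness of the reassignment protocol itself, so Assumption~\ref{assumption:finite-view-change}, needed only for termination of reads and writes, is deliberately not used here.) Lemma~\ref{lem:lem3} further yields, for every $v_{k}$, a weighted majority of correct servers that reaches $v_{k}$ before it is uninstalled; this pins down an interval during which servers with current view $v_{k}$ may initiate pairwise weight reassignments for $v_{k}.succ$.

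For (ii), fix a correct server $s_{i}$ wishing to raise its weight and consider the moment just after it installs a fresh view $v_{k}$, before any \textbf{change\_view} for $v_{k}.succ$ has been processed, so that $v_{k}.succ \notin s_{i}.dirty\_views$ (Conditions C1R and C2R hold). From $2f + 1 \leq n$ (Assumption~\ref{property:fn}) and Assumption~\ref{assumption:wl-wu} one gets $\mathbb{wl} = n/(2\times(n-f)) < 1 < n/(2f) = \mathbb{wu}$, and by Property~\ref{property:weight-lower-upper-bound} every weight lies strictly in $(\mathbb{wl}, \mathbb{wu})$; hence there is headroom for $s_{i}$ below $\mathbb{wu}$ (Condition C4R, for a small enough $\epsilon$) and headroom above $\mathbb{wl}$ for a heavier partner. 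Choosing a correct server $s_{j}$ that has not moved past $v_{k}$ and that the monitoring ranks as higher-latency from both endpoints, $s_{i}$ sends \textbf{propose\_pwr} for $v_{k}.succ$; then $s_{j}$ satisfies Conditions C1S--C3S and, since its weight still exceeds $\mathbb{wl} + \epsilon$, Condition C4S, so it records a $-\epsilon$ tuple and replies with \textbf{accept\_pwr}; if $v_{k}.succ$ is still clean at $s_{i}$ on arrival, $s_{i}$ records a $+\epsilon$ tuple (Line~\ref{lbl:alg:pwr:26}) and its weight in $v_{k}.succ$ strictly increases. The total weight is preserved, so Property~\ref{property:total-weight} is maintained and no unacceptable state is produced.

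The hard part is the last step of (ii): the system is asynchronous, so between $s_{i}$'s \textbf{propose\_pwr} and the matching \textbf{accept\_pwr} a \textbf{change\_view} for $v_{k}.succ$ can be delivered and processed, putting $v_{k}.succ$ into the $dirty\_views$ of $s_{i}$ or $s_{j}$ and aborting the attempt (and, as a minor bookkeeping artifact, leaving a dangling entry in $s_{i}.pwr\_requests$). I would close this by observing that the per-view timeout (Line~\ref{line:alg:vc-1} of Algorithm~\ref{alg:change-view}) can be taken large enough that, for infinitely many of the views $v_{k}$ that $s_{i}$ reaches, the two-message pairwise exchange among correct servers completes before any correct server's timeout for $v_{k}$ fires and its \textbf{change\_view} for $v_{k}.succ$ propagates; since $s_{i}$ needs to win this race only once per desired increment and is offered a fresh opportunity at every view it installs, it eventually changes its weight. (A more model-theoretic alternative is to add a fairness clause coupling view-change triggering to reassignment quiescence, but the timeout version is the one I would write up, together with the standing assumption that the monitoring scores order any pair of participating servers consistently.)
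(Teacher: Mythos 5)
Your part (i) is essentially the paper's entire proof: the paper disposes of this theorem in one line by citing Lemma~\ref{lem:lem3} and Lemma~\ref{lem:view-change}, i.e.\ it reads ``servers can change their weights over time'' as ``the view-change machinery, which is the only vehicle for weight reassignment, remains live.'' You go further and try to prove the stronger statement that a particular correct server that \emph{wants} to raise its weight eventually succeeds in completing a pairwise weight reassignment. That stronger reading is more faithful to what one would want the theorem to mean, but it is also where your argument breaks.

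The gap is in how you close the race in part (ii). The model is asynchronous: the paper explicitly makes no assumptions on message delays, and local clocks are unsynchronized counters with unbounded drift. Consequently there is no value of the per-view timeout (Line~\ref{line:alg:vc-1} of Algorithm~\ref{alg:change-view}) that guarantees the two-message \textbf{propose\_pwr}/\textbf{accept\_pwr} exchange lands before some correct server's timeout for $v_{k}$ fires and its \textbf{change\_view} marks $v_{k}.succ$ dirty at $s_{i}$ or $s_{j}$; an adversarial scheduler can defeat every view's attempt, so ``$s_{i}$ is offered a fresh opportunity at every view and needs to win only once'' does not yield eventual success. Your fallback (a fairness clause coupling view-change triggering to reassignment quiescence) would work but is an assumption the paper does not make. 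Separately, your claim that views are installed without bound ``forever'' directly contradicts Assumption~\ref{assumption:finite-view-change} ($|\sigma|=m\in\mathbb{N}$); you flag the omission as deliberate, but you cannot simultaneously keep that assumption (which the paper needs elsewhere) and assert an unbounded supply of succeeding views. If you retreat to the paper's weaker reading, Lemmas~\ref{lem:lem3} and~\ref{lem:view-change} already suffice and part (ii) is unnecessary; if you insist on the stronger reading, you need an explicit fairness or partial-synchrony hypothesis to finish.
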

        \begin{proof}
            Using Lemmas \ref{lem:lem3} and \ref{lem:view-change}, the proof of this theorem is straightforward.
        \end{proof}

\section{Read/Write Protocols}\label{sec:storage-system}
    In this paper we extend the (static) ABD protocol \cite{abd} to present
        a dynamic weighted atomic storage system that provides atomic r/w protocols.
    Algorithm \ref{alg:abd-client-read} describes r/w protocols executed by clients.
    Besides, Algorithm \ref{alg:abd-server} describes how a server processes r/w operations.
    The main differences between the original ABD protocol and the extended version are as follows.

    \begin{itemize}[leftmargin=*]
        \setlength{\itemindent}{0.3em}
            \item[1)] Each client has a variable $cview$ denoting its current view and initialized with view $v_{0}$ (Line \ref{line:alg-rw:1} of Algorithm \ref{alg:abd-client-read}).
                Each client adds its current view to every r/w request;
                moreover, each client sends its r/w requests to all servers (Line \ref{line:alg-rw:5} of Algorithm \ref{alg:abd-client-read}).
            \item[2)] After receiving each r/w request $r$, each server $s$ determines its
                    current view ($s.cview$) by calling function $get\_cview()$ from Algorithm \ref{alg:change-view},
                    and sets a variable $weight$ to $\perp$
                    (Lines \ref{line:s:getcviewr} and \ref{line:s:getcvieww} of Algorithm \ref{alg:abd-server}).
                For write requests, if the current view of each request is the same as $s.cview$,
                    server $s$ executes the request (Lines \ref{line:s:11}-\ref{line:s:12} of Algorithm \ref{alg:abd-server}).
                Additionally, server $s$ resets the value of variable $weight$ by calling function $get\_weight(s.cview)$ from Algorithm \ref{alg:pwr}
                     (Line \ref{line:s:13} of Algorithm \ref{alg:abd-server}).
                Similarly, for read requests, if the current view of each request is the same as $s.cview$,
                    server $s$ resets the value of variable $weight$ (Line \ref{line:s:7} of Algorithm \ref{alg:abd-server}).
                Then, server $s$ adds $s.cview$ and $weight$ to its response of the client that issued the request.
            \item[3)] Each client updates its current view as soon as it receives a more up-to-date view
                than its current view and restarts the executing operation
                (Lines \ref{line:alg-rw:12}-\ref{alg:line:restart-read} and \ref{line:alg-rw:30}-\ref{alg:line:restart-write} of Algorithm \ref{alg:abd-client-read}).
            \item[4)] Clients consider the weights of servers to decide whether a quorum is constituted
                (Lines \ref{line:alg-rw:15} and \ref{line:alg-rw:33} of Algorithm \ref{alg:abd-client-read}).
    \end{itemize}

    \begin{algorithm}[bt]
        \caption{ABD - client $c_{i}$}
        \label{alg:abd-client-read}
        \small
        \begin{algorithmic}[1]
            \STATEx{\hspace{-1.65em}\textbf{variables}}
                \STATE{$opCnt \leftarrow 0$, $cview \leftarrow v_{0}$}  \label{line:alg-rw:1}

%            \vspace{0.5em}
            \STATEx{\hspace{-1.65em}\textbf{functions} to r/w the atomic storage}
                \STATE{$read() \equiv read\_write(\perp)$}
                \STATE{$write(value) \equiv read\_write(value)$}

%            \vspace{0.5em}
            \STATEx{\hspace{-1.65em}\textbf{function} $read\_write(value)$}
                \STATEx{\hspace{-0.8em}\textbf{ABD Phase 1}}
                    \STATE{$opCnt \leftarrow opCnt + 1$}
                    \STATE{send $\langle \textbf{read}, opCnt, cview \rangle$ to all servers} \label{line:alg-rw:5}
                    \STATE{$msgs \leftarrow \emptyset$}
                    \REPEAT
                        \STATE{\textbf{upon receipt of} message $\langle \textbf{readack}, val, ts, cid, opCnt, v, w \rangle$}
                        \Indent
                        \vspace{-0.1cm}
                            \IF{$cview = v$}
                                \STATE{$msgs \leftarrow msgs \cup \big\{ (val, ts, cid, opCnt, v, w) \big\}$}
                            \ELSE
                                \IF{$cview < v$}                             \label{line:alg-rw:12}
                                    \STATE{$cview \leftarrow v$}             \label{line:alg-rw:13}
                                \ENDIF
                                \STATE{$read\_write(value)$} \COMMENT{restart the operation} \label{alg:line:restart-read}
                            \ENDIF
                        \EndIndent
                    \UNTIL{$n/2 \leq sum \big(\{w \ | \ (*,*,*,*,*,w) \in msgs\}\big) $}\label{line:alg-rw:15}
                    \IF{$value = \perp$}
                        \STATE{$maxts \leftarrow max\big(\{ts \ | \ (*,ts,*,*,*,*) \in msgs\}\big)$}
                        \STATE{$maxcid \leftarrow max\big(\{cid \ | \ (*,ts,cid,*,*,*) \in msgs$}
                            \STATEx{\hfill $\text{ and } ts = maxts\}\big)$}
                        \STATE{$value \leftarrow \{val \ | \ (val,ts,cid,*,*,*) \in msgs$}
                            \STATEx{\hfill $ \text{ and } ts = maxts \text{ and } cid = maxcid\}$}
                    \ELSE
                        \STATE{$maxts \leftarrow max\big(\{ts \ | \ (*,ts,*,*,*,*) \in msgs\}\big) + 1$}
                        \STATE{$maxcid \leftarrow c_{i}$, $maxval \leftarrow val$}
                    \ENDIF

                \STATEx{\hspace{-0.8em}\textbf{ABD Phase 2}}
                    \STATE{send $\langle \textbf{write}, value, maxts, maxcid, opCnt, cview \rangle$ to all servers}
                    \STATE{$msgs \leftarrow \emptyset$}
                    \REPEAT
                        \STATE{\textbf{upon receipt of} message $\langle \textbf{writeack}, opCnt, v, w \rangle$}
                        \Indent
                        \vspace{-0.1cm}
                            \IF{$cview = v$}
                                \STATE{$msgs \leftarrow msgs \cup \{(opCnt, v, w)\}$}
                            \ELSE
                                \IF{$cview < v$}         \label{line:alg-rw:30}
                                    \STATE{$cview \leftarrow v$}     \label{line:alg-rw:31}
                                \ENDIF
                                \STATE{$read\_write(value)$} \COMMENT{restart the operation} \label{alg:line:restart-write}
                            \ENDIF
                        \EndIndent
                    \UNTIL{$n/2 \leq sum(\{w \ | \ (*,*,w) \in msgs\})$}\label{line:alg-rw:33}
                    \STATE{return $value$}

        \end{algorithmic}
    \end{algorithm}

    \begin{algorithm}[bt]
        \caption{ABD - server $s_{i}$}
        \label{alg:abd-server}
        \small
        \begin{algorithmic}[1]
            \STATEx{\hspace{-1.65em}\textbf{variables}}
                \STATE{$ts \leftarrow 0$, $cid \leftarrow 0$, $val \leftarrow \perp$}

%            \vspace{0.5em}
            \STATEx{\hspace{-1.65em}\textbf{functions}}
                \STATE{$get\_ts() \equiv ts$}
                \STATE{$get\_ts\_cid\_val() \equiv (ts, cid, val)$}
                \STATE{$set\_ts\_cid\_val(t, c, v) \equiv ts \leftarrow t; cid \leftarrow c; val \leftarrow v $}

%            \vspace{0.5em}
            \STATEx{\hspace{-1.65em}\textbf{upon receipt of} message $\langle \textbf{read}, cnt, v \rangle$ from client $c$}
                \STATE{$cview \leftarrow get\_cview()$ from Algorithm \ref{alg:change-view}, $weight \leftarrow \perp$} \label{line:s:getcviewr}
                \IF{$cview = v$}
                    \STATE{$weight \leftarrow get\_weight(cview)$ from Algorithm \ref{alg:pwr}} \label{line:s:7}
                \ENDIF
                \STATE{send $\langle \textbf{readack}, val, ts, cid, cnt, cview, weight \rangle$ to client $c$}

%            \vspace{0.5em}
            \STATEx{\hspace{-1.65em}\textbf{upon receipt of} message $\langle \textbf{write}, val', ts', cid', cnt, v\rangle$ from client $c$}
                \STATE{$cview \leftarrow get\_cview()$ from Algorithm \ref{alg:change-view}, $weight \leftarrow \perp$} \label{line:s:getcvieww}
                \IF{$cview = v$}
                    \IF{$ts' > ts \text{ or } (ts' = ts \text{ and } cid' > cid)$} \label{line:s:11}
                        \STATE{$ts \leftarrow ts'$, $cid \leftarrow cid'$, $val \leftarrow val'$} \label{line:s:12}
                    \ENDIF
                    \STATE{$weight \leftarrow get\_weight(cview)$ from Algorithm \ref{alg:pwr}} \label{line:s:13}
                \ENDIF
                \STATE{send $\langle \textbf{writeack}, cnt, cview, weight \rangle$ to client $c$}
        \end{algorithmic}
    \end{algorithm}

    \subsection{Correctness}
        In the following, we prove our storage system satisfies the properties defined in Section \ref{sec:preliminaries}.

        \vspace{0.5em}
        \noindent\textbf{Storage liveness. \small} We have to prove that
            every r/w operation executed by a correct client eventually terminates.
        To do so, we prove the following theorem.

        \begin{theorem}
            Every r/w operation executed by a correct client eventually terminates.
        \end{theorem}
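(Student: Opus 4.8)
The plan is to show that any r/w operation eventually completes both ABD phases. Fix a correct client $c$ executing an operation $op$ (a read or write). Each phase consists of sending a message to all servers and waiting in a \texttt{REPEAT} loop until the responses from a weighted majority (in the client's current view) arrive. The operation can only fail to terminate in one of two ways: (i) $c$ is stuck forever in a \texttt{REPEAT} loop because it never accumulates enough weight from responding servers in view $c.cview$, or (ii) $c$ restarts the operation infinitely often because it keeps learning of more up-to-date views (Lines \ref{alg:line:restart-read}, \ref{alg:line:restart-write}). I would rule out (i) and (ii) in turn.

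For (i): suppose $c.cview = v$ at some point and no further view change occurs that $c$ learns about. By Assumption \ref{assumption:finite-view-change} the sequence of installed views is finite, so there is a last installed view; by Lemmas \ref{lem:lem1} and \ref{lem:lem3} all correct servers eventually reach that view and stay there, and the total weight of correct servers in it exceeds $n/2$ (this uses Assumptions \ref{assumption:wl-wu}, \ref{assumption:quorum-size} and Properties \ref{property:total-weight}, \ref{property:weight-lower-upper-bound}, exactly as in the proof of Lemma \ref{lem:lem1}). Once $c$'s view equals this final view, every correct server processes $c$'s request with a matching \texttt{cview}, sets \texttt{weight} via \texttt{get\_weight}, and replies (Algorithm \ref{alg:abd-server}); since links are reliable, $c$ receives these acks, and their weights sum to more than $n/2$, so the loop guard in Line \ref{line:alg-rw:15} (resp. Line \ref{line:alg-rw:33}) is falsified and the phase ends. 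One subtlety to handle: while servers are disabling r/w operations during a view change (Line \ref{alg:line:disable-rw}), $c$'s request may be ignored; but operations are re-enabled after the new view is installed (Line \ref{alg:line:enable-rw}), and a correct client keeps its request outstanding, so this is only a finite delay for each of the finitely many view changes.

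For (ii): each restart is triggered because $c$ received a response carrying a view $v$ with $c.cview < v$, after which $c$ sets $c.cview \leftarrow v$ (Lines \ref{line:alg-rw:13}, \ref{line:alg-rw:31}). Thus $c.cview$ strictly increases with each restart. Since views form a finite chain $v_0 < v_1 < \dots < v_m$ (Assumption \ref{assumption:finite-view-change}, Lemma \ref{lem:lem2}) and server views only ever advance along this chain, the view numbers reported to $c$ are bounded by $m$; hence $c$ can restart at most $m$ times. After the last restart, $c.cview$ equals the final installed view and the argument of (i) applies, so the current phase terminates. Applying this to Phase 1 and then Phase 2, $op$ terminates.

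The main obstacle is the interaction between the client loop and the view-change machinery: I must argue that $c$ cannot be perpetually "one view behind" — i.e., that it is not possible for servers to keep installing new views fast enough that $c$'s restarts never catch up. This is precisely where Assumption \ref{assumption:finite-view-change} is essential; without it the theorem is false, and with it the monotone-and-bounded argument on $c.cview$ closes the gap. A secondary point to get right is that a correct client's outstanding request survives the temporary r/w-disable windows, so that once the system stabilizes in the final view the request is actually served rather than silently dropped.
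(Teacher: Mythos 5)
Your proposal is correct and follows essentially the same route as the paper's proof: both rest on Assumption~\ref{assumption:finite-view-change} (finitely many views), Lemma~\ref{lem:view-change} (view changes terminate, so the r/w-disable windows are transient), and the fact that the correct servers in the final installed view carry total weight exceeding $n/2$, so each phase's weighted-majority wait is eventually satisfied. Your reorganization into ruling out the two failure modes, with the explicit monotone-and-bounded argument on $c.cview$ to cap the number of restarts at $m$, is a somewhat sharper rendering of the paper's ``the number of views is finite, hence the operation is eventually executed'' step, but it is the same argument in substance.
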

        \begin{proof}
            Consider a r/w operation $o$ executed by a correct client $c_{i}$.
            For the first phase $ph1$ from $o$,
                client $c_{i}$ sends a message to all servers and waits for servers' responses.
            There are three cases as follows:
            \begin{itemize}[leftmargin=*]
                \setlength{\itemindent}{0.3em}
                \item[1)] a weighted majority of servers that their views are equal to $c_{i}.cview$
                        responded to client $c_{i}$.
                    Consequently, client $c_{i}$ ends $ph1$ and starts the execution of the second phase
                        $ph2$ from $o$ by sending a message to all servers and waits for servers' responses.
                    There are three sub-cases:
                    \begin{itemize}[leftmargin=*]
                        \setlength{\itemindent}{0.5em}
                        \item[1.1)] a weighted majority of servers that their views are equal to $c_{i}.cview$
                                responded to client $c_{i}$.
                            Consequently, client $c_{i}$ ends $ph2$, and the operation terminates.
                        \item[1.2)]
                            Client $c_{i}$ receives a response from a server with view $v$ such that $c_{i}.cview \neq v$.
                            If $c_{i}.cview < v$, client $c_{i}$ updates its current view.
                            Then, the client restarts the operation.
                            We just need to show that this sub-case occurs a finite number of times, i.e.,
                                the operation $o$ is not restarted infinitely many times.
                            We can conclude from Lemma \ref{lem:view-change} that the process of uninstalling view $lastview$ eventually terminates.
                            Besides, according to Assumption \ref{assumption:finite-view-change}, the number of views are finite.
                            Hence, the operation $o$ will eventually be executed.
                        \item[1.3)]
                            Operation $o$ is concurrent with uninstalling view $lastview$ so that
                                r/w operations are disabled.
                            We can conclude from Lemma \ref{lem:view-change} that the process of uninstalling view $lastview$ eventually terminates.
                            After that, r/w operations are enabled again.
                            Then, we reach to other cases or sub-cases.
                    \end{itemize}
                \item[2)]
                    Client $c_{i}$ receives a response from a server with view $v$ such that $c_{i}.cview \neq v$.
                    If $c_{i}.cview < v$, client $c_{i}$ updates its current.
                    Then, the client restarts the operation (cannot start phase $ph2$).
                    This case is similar to sub-case 1.2.
                \item[3)]
                    Operation $o$ is concurrent with uninstalling view $lastview$ so that
                        r/w operations are disabled.
                    This case is similar to sub-case 1.3.
            \end{itemize}
        \end{proof}

        \vspace{0.5em}
        \noindent\textbf{Storage atomicity. \small} We have to prove that the r/w
            protocols of our weighted storage implement an atomic r/w register (Definition \ref{def:atomic-register}).
        The sketch of the proof is as follows.

        \begin{lemma}\label{lem:s1}
            Each phase of a r/w operation always finishes in the last installed view in the system ($lastview$).
        \end{lemma}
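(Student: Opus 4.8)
The plan is to read off, from the control flow of Algorithm~\ref{alg:abd-client-read} and Algorithm~\ref{alg:abd-server}, a precise description of the situation at the instant a phase of a r/w operation terminates, and then to place the terminating view inside the chain $\sigma$ of installed views by invoking the lemmas already proved for the view changer. First I would examine the \textsc{repeat}--\textsc{until} loop implementing each of the two ABD phases: between two successive restarts (those caused by Line~\ref{line:alg-rw:13} or Line~\ref{line:alg-rw:31}) the client's variable $cview$ is constant, say equal to $v$, and a server reply is inserted into $msgs$ only when the view it carries equals $v$, every reply carrying a different view forcing the client to discard its progress and restart. Combined with the server code of Algorithm~\ref{alg:abd-server}, this yields: when a phase finishes there is a set $\rho$ of servers, each of which had current view $v$ (and hence reported weight $s.v.weight$ as computed by $get\_weight$ of Algorithm~\ref{alg:pwr}) at the moment it handled the client's request, with $\sum_{s\in\rho}s.v.weight>n/2$ (the exit condition, Line~\ref{line:alg-rw:15}, resp.\ Line~\ref{line:alg-rw:33}). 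By Assumption~\ref{assumption:quorum-size} and Property~\ref{property:total-weight}, $\rho$ is a weighted quorum of view $v$, and any two weighted quorums of the same view intersect, since each has weight exceeding $n/2$ while the total weight assigned in view $v$ is at most $n$.

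Next I would argue that $v$ is exactly $lastview$. I would first use $\rho$ to show that $v$ is a genuine installed view on the unique chain $\sigma$: its members installed $v$, so Lemmas~\ref{lem:lem0}, \ref{lem:lem01}, \ref{lem:lem1}, and~\ref{lem:lem2} identify $v$ as a view that was installed in the system. It then remains to exclude $v<lastview$, i.e.\ that $v$ has already been uninstalled by the time the phase finishes. Here I would iterate Lemma~\ref{lem:lem0} (exactly as in the proof of Lemma~\ref{lem:lem01}) to obtain a weighted quorum $\rho'$ of view $v$ whose members uninstalled $v$ --- each having first disabled its r/w operations (Line~\ref{alg:line:disable-rw} of Algorithm~\ref{alg:change-view}) and then sent its \textbf{state\_update} for $v$ (Line~\ref{alg:line:start-update-state}) --- pick a server $s\in\rho\cap\rho'$, and derive a contradiction from the order of events at $s$ together with the monotonicity of current views (Lemma~\ref{lem:lem01}): $s$ answered the request while its view was $v$, hence before it disabled r/w to uninstall $v$, whereas the replies of the whole quorum $\rho$ cannot have been gathered by the client --- without meanwhile receiving a more up-to-date view and restarting --- after $v$ had been superseded in the system. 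Combining the two parts gives $v=lastview$, and applying the statement to both phases shows that the two phases of an operation terminate in one and the same (currently installed) view.

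The step I expect to be the main obstacle is the last one. Because the system is fully asynchronous, the replies forming $\rho$ may be delivered long after they were sent, possibly after $lastview$ has already advanced, so the argument has to juggle the monotonicity of each server's view (Lemma~\ref{lem:lem01}) and of $lastview$ (from Lemmas~\ref{lem:lem01}--\ref{lem:lem2}), the fact that r/w operations stay disabled on a server throughout its uninstallation of a view, and the client's restart behaviour, and it must also handle the possibility that the phase is restarted many times before it finally terminates. By contrast, the control-flow reading of the client loop and the quorum-intersection argument from Assumption~\ref{assumption:quorum-size} and Property~\ref{property:total-weight} should be routine given the view-changer lemmas.
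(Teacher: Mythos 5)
Your plan follows essentially the same route as the paper's proof: assume the phase finishes in a view $v<lastview$, use Lemma~\ref{lem:lem01} (built on Lemma~\ref{lem:lem0}) to obtain a weighted majority $\rho'$ that had uninstalled $v$, and derive a contradiction from the fact that $\rho$ and $\rho'$ must intersect by Assumption~\ref{assumption:quorum-size} and Property~\ref{property:total-weight}. The paper's version is terser---it simply asserts $\rho\cap\rho'=\emptyset$ and declares the intersection property violated---whereas you explicitly flag and propose to close the timing subtlety (a server could reply while in $v$ and only later uninstall it) that the paper glosses over, so your write-up is, if anything, the more careful rendering of the same argument.
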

        \begin{proof}
            We prove the lemma by contradiction.
            For the sake of contradiction, consider a client $c$ that finishes a phase $ph$ by receiving
                a quorum of replies from servers (say, $\rho$) in view $v < lastview$.
            Without loss of generality, assume that $v.succ = lastview$.
            According to Lemma \ref{lem:lem01}, by installing view $lastview$ in the system,
                at least a weighted majority of servers (say, $\rho'$) in view $v$ had uninstalled
                view $v$; therefore, a weighted majority of servers in view $v$ cannot participate in phase $ph$.
            Since $\rho \cap \rho' = \emptyset$ (quorum intersection property is not satisfied),
                we have a contradiction.
        \end{proof}

        \begin{lemma}\label{lem:s2}
            Each r/w operation always finishes in the last installed view in the system ($lastview$).
        \end{lemma}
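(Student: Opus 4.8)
The plan is to bootstrap from Lemma~\ref{lem:s1}, which already says that \emph{each individual phase} of a r/w operation finishes in the last installed view; what remains is to glue the two ABD phases of a single operation together and to rule out the scenario in which $lastview$ advances between the end of Phase~1 and the end of Phase~2 without the executing client detecting it. Throughout I work with the pseudocode of Algorithm~\ref{alg:abd-client-read}, using that reads and writes both run Phase~1 to completion, then Phase~2 to completion, and that the operation returns immediately after Phase~2.

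First I would fix an operation $o$ executed by a correct client $c$ and consider the final attempt of $o$, i.e.\ the one that actually returns without triggering a restart. During Phase~1 the client adds a \textbf{readack} to $msgs$ only when the reply's view equals $c.cview$, and otherwise it advances $c.cview$ and restarts; hence when Phase~1 passes its until-condition (Line~\ref{line:alg-rw:15}) on the returning attempt, every message in $msgs$ carries one common view $v_1$ and $c.cview = v_1$ at that instant. By Lemma~\ref{lem:s1} applied to Phase~1, $v_1 = lastview$ at the moment Phase~1 finishes. The client then enters Phase~2 without touching $c.cview$, so it tags its Phase~2 requests with $v_1$.

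Next I would analyse the completion of Phase~2. Since this is the returning attempt, Phase~2 passes its until-condition (Line~\ref{line:alg-rw:33}) without restarting, so every \textbf{writeack} counted there has view equal to $c.cview = v_1$. Applying Lemma~\ref{lem:s1} to Phase~2, this common reply view is exactly $lastview$ at the instant Phase~2 finishes, hence $lastview = v_1$ there as well. Because $lastview$ is monotone non-decreasing (Lemma~\ref{lem:lem01}) and Phase~1 finishes no later than Phase~2, $lastview$ cannot have moved between the two events; writing $v := v_1$, both phases of $o$ finished in view $v$, and $v = lastview$ still holds at the moment $o$ returns, which is the claim.

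The only genuinely delicate point — the step I expect to be the main obstacle — is justifying that the client cannot ``miss'' a view change strictly between the two phases: a priori $lastview$ could advance from $v_1$ to some $v' > v_1$ while Phase~2 is running. The resolution is that for Phase~2 to terminate without a restart the client must still assemble a weighted majority of replies bearing view $v_1$, which by Lemma~\ref{lem:lem01} (at most one installed view at a time) together with Lemma~\ref{lem:s1} forces $v_1$ to still be $lastview$ when that majority is collected; if instead $lastview$ had already moved to $v'$, the servers' replies would carry $v'$, the client would enter the restart branch, and the attempt under consideration would not be the returning one. Everything else is routine bookkeeping over Algorithm~\ref{alg:abd-client-read}, and Lemma~\ref{lem:s2} then feeds directly into the atomicity argument together with the quorum-intersection property established via Lemma~\ref{lem:lem01}.
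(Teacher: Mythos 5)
Your proposal is correct and follows essentially the same route as the paper: the paper's proof simply observes that both ABD phases of an operation are executed in the same view (which you justify in detail via the restart-on-view-mismatch logic of Algorithm~\ref{alg:abd-client-read}) and then invokes Lemma~\ref{lem:s1} for each phase. Your additional care about $lastview$ not advancing between the two phases is a worthwhile elaboration of the same argument, not a different approach.
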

        \begin{proof}
            Both read and write phases of each r/w operation of Algorithm \ref{alg:abd-client-read},
                 are executed in the same view.
            Each phase of a r/w operation always finishes in the last installed view in the system
                according to Lemma \ref{lem:s1} that completes the proof.
        \end{proof}

        Let $\mathcal{R}$ be the register and $write(\beta)$ be the operation to write
            the value $\beta$ in $\mathcal{R}$ with an associated timestamp $ts(\beta)$.
        \begin{lemma}\label{lem:asli}
            Let $\alpha$ be the value of last write operation completed in view $v = lastview$.
            Then, in the next view $v.succ$, one of the following cases may happen.
            (1) If there is  no concurrent write operation with changing view from $v$ to $v.succ$,
                then $\alpha$ is propagated to view $v.succ$.
            (2) If changing view from $v$ to $v.succ$ is concurrent with a write operation
                $write(\beta)$ such that $ts(\alpha)<ts(\beta)$, then
                either $\alpha$ or $\beta$ is propagated to $v.succ$.
        \end{lemma}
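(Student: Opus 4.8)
The plan is to combine three ingredients: a characterization of the completed write $write(\alpha)$ drawn from the server code of Algorithm~\ref{alg:abd-server} (and Lemma~\ref{lem:s2}); a characterization of the installation of $v.succ$ drawn from Lemma~\ref{lem:lem0} together with the state-update round of Algorithm~\ref{alg:change-view}; and a weighted quorum-intersection argument resting on Assumption~\ref{assumption:quorum-size} and Property~\ref{property:total-weight}. The register value reached in $v.succ$ is whatever the installing server adopts at line~\ref{alg:line:set-ts-cid-val}, so the whole proof is about what timestamp and value that line can possibly carry.

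First I would unfold the hypotheses. Since $write(\alpha)$ completed and, by Lemma~\ref{lem:s2}, both of its phases ran in $v=lastview$, its second phase collected $\langle\textbf{writeack},\ldots\rangle$ replies from a set $Q_\alpha$ of servers whose view was $v$ and whose weights in $v$ sum to more than $n/2$; by the write handler of Algorithm~\ref{alg:abd-server}, every $s\in Q_\alpha$ had, at the moment it produced that reply, a register timestamp $\geq ts(\alpha)$. On the other side, installing $v.succ$ requires (Lemma~\ref{lem:lem0}, together with lines~\ref{line:alg-vc-read-state}--\ref{alg:line:set-ts-cid-val} of Algorithm~\ref{alg:change-view}) some server to wait at line~\ref{alg:line:wait-weighted-majority} for $\langle\textbf{state\_update},*,v,*\rangle$ messages from a set $Q_{vc}$ of servers whose weights in $v$ sum to more than $n/2$, and then to adopt the value carried with the largest timestamp. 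The key structural observation is that a server tags a state-update message with $v$ only inside its own $v\to v.succ$ transition, where it first disables r/w operations (line~\ref{alg:line:disable-rw}) and only afterwards sends the state-update (line~\ref{alg:line:start-update-state}).

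Next comes the intersection step. Because $Q_\alpha$ and $Q_{vc}$ are both weighted majorities \emph{in view $v$}, the sum of their weights exceeds $n$, while Property~\ref{property:total-weight} caps the total weight in $v$ at $n$; hence some server $s^*$ lies in $Q_\alpha\cap Q_{vc}$. Now $s^*\in Q_\alpha$ means $s^*$ acknowledged $write(\alpha)$ while its r/w operations were still enabled in $v$, and $s^*\in Q_{vc}$ means it later broadcast a state-update for $v$, which it does only after disabling those operations; so the acknowledgement preceded the state-update, and therefore the timestamp $s^*$ reports is $\geq ts(\alpha)$. Consequently the installing server computes $maxts\geq ts(\alpha)$ and propagates to $v.succ$ a value with timestamp $\geq ts(\alpha)$.

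It remains to pin down that value. In case~(1) no write overlaps the transition, so the largest timestamp present anywhere is $ts(\alpha)$ and the only write bearing it is $write(\alpha)$ (the pair timestamp/client identifier identifies it uniquely); thus $maxts=ts(\alpha)$ and $\alpha$ is propagated — and since the same reasoning applies verbatim to every server performing the $v\to v.succ$ transition, $\alpha$ is adopted by all of them. In case~(2) the only timestamp strictly above $ts(\alpha)$ anywhere is $ts(\beta)$, so $maxts\in\{ts(\alpha),ts(\beta)\}$, and the propagated value is $\alpha$ or $\beta$ accordingly. I expect the delicate point to be the ordering argument for $s^*$: the claim that ``$s^*$ acknowledged $write(\alpha)$ in view $v$'' forces the acknowledgement to happen before $s^*$ disabled its r/w operations, and hence before it broadcast its state-update, and this leans entirely on the fact that a writeack is emitted only with r/w operations enabled; everything else is a routine intersection-and-maximum computation. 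Some care is also needed to make precise that $\beta$ is the \emph{only} write concurrent with the transition, so that no timestamp strictly between $ts(\alpha)$ and $ts(\beta)$ can surface.
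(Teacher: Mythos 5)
Your proof is correct, and it is organized rather differently from the paper's. The paper's proof is a terse case analysis: Case~1 simply asserts that the state-update exchange of Algorithm~\ref{alg:change-view} carries $\alpha$ into $v.succ$, and Case~2 is split into three sub-cases according to how widely $\beta$ has spread among the servers of $v$ (no server, some servers, a weighted majority), with the middle sub-case only remarking that the client will restart $write(\beta)$. You instead prove one quantitative fact up front --- that any server installing $v.succ$ computes $maxts \geq ts(\alpha)$ at line~\ref{alg:line:set-ts-cid-val} --- via the intersection of the weighted write quorum $Q_\alpha$ with the weighted state-update quorum $Q_{vc}$ (both exceed $n/2$ in view $v$ while Property~\ref{property:total-weight} caps the total at $n$), combined with the ordering argument that a \textbf{writeack} in view $v$ can only be emitted before the server disables r/w operations and hence before it reads and broadcasts its state. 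Both cases of the lemma then fall out of the observation that the only timestamps $\geq ts(\alpha)$ in circulation are $ts(\alpha)$ and $ts(\beta)$. This buys you an actual justification of the claim the paper merely asserts in Case~1 (why $\alpha$ must appear in the state-update quorum at all), and it subsumes the paper's three sub-cases of Case~2 in a single step; the paper's decomposition, for its part, makes more visible the operational fact that in the intermediate sub-case $write(\beta)$ has not completed and will be retried. Your flagged delicate point --- the writeack-before-state-update ordering at the intersection server --- is indeed the load-bearing step, and it holds for exactly the reason you give: between line~\ref{alg:line:disable-rw} and line~\ref{alg:line:enable-rw} no acknowledgement is produced, and after line~\ref{alg:line:update-view} the server's view is no longer $v$.
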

        \begin{proof}
            \textit{Case 1.} Since there is no concurrent write operation with
                changing view from $v$ to $v.succ$,
                servers by executing Lines \ref{alg:line:start-update-state}-\ref{alg:line:set-ts-cid-val}
                    of Algorithm \ref{alg:change-view},
                propagate $\alpha$ to view $v.succ$.
            \textit{Case 2.} Three following sub-cases should be considered.
                (a) No server in $v$ updated its local state with $\beta, ts(\beta)$ before
                        installing $v.succ$.
                    We can reduce this case to the Case 1.
                (b) Some servers in $v$ store $\beta, ts(\beta)$ and some servers
                        store $\alpha, ts(\alpha)$.
                    Due to Lemmas \ref{lem:s1} and \ref{lem:s2},
                        client restart $write(\beta)$ and tries it later.
                (c) A weighted majority of servers in $v$ stores  $\beta, ts(\beta)$,
                        so this value should be written to view $v.succ$.
        \end{proof}

        \begin{lemma}\label{lem:13}
            Assume that a read operation $read_{1}$ returns a value $\alpha_{1}$
                at time $t_{1}^{e}$, which has an associated timestamp $ts_{1}$.
            A read operation $read_{2}$ started at time $t_{2}^{s}>t_{1}^{e}$ returns a value
                $\alpha_{2}$ associated with a timestamp $ts_{2}$ such that either:
            (1) $ts_{1} = ts_{2}$ and $\alpha_{1} = \alpha_{2}$, or (2) $ts_{1} \leq ts_{2}$ and
                $\alpha_{2}$ was written after $\alpha_{1}$.
        \end{lemma}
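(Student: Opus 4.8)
The plan is to reduce the claim to quorum intersection, treating separately the case where $read_1$ and $read_2$ complete in the same installed view and the case where the installed view advances between them. First I would recall, via Lemmas~\ref{lem:s2}, \ref{lem:lem01} and \ref{lem:lem2}, that $read_1$ completes in some view $v_1 = lastview$ at time $t_1^e$, that $read_2$ completes in some view $v_2 = lastview$, and that $v_1 \le v_2$ with a unique chain $v_1, v_1.succ, \dots, v_2$ joining them. From the ABD structure of Algorithm~\ref{alg:abd-client-read}, phase~2 of $read_1$ deposited the pair $(ts_1,\alpha_1)$ into a set $Q_1$ of servers forming a weighted majority while those servers had $cview = v_1$, all before $t_1^e$; and phase~1 of $read_2$ collects replies from a set $Q_2$ of servers forming a weighted majority whose $cview$ equals $read_2$'s own view, all after $t_2^s > t_1^e$.

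\emph{Case $v_1 = v_2$.} Both $Q_1$ and $Q_2$ are weighted majorities in the same view, so by Assumption~\ref{assumption:quorum-size} (together with Property~\ref{property:total-weight}) they intersect; fix $s^\star \in Q_1 \cap Q_2$. A server's stored timestamp never decreases, and $s^\star$ had timestamp $\ge ts_1$ by $t_1^e < t_2^s \le$ the moment it answered $read_2$'s phase~1; hence $read_2$ observes at $s^\star$ a timestamp at least $ts_1$, so $ts_2 \ge ts_1$. If $ts_2 = ts_1$, the maximal $(\text{timestamp},\text{client-id})$ pair collected by $read_2$ equals that of $read_1$, and since each such pair is produced by a unique write, $\alpha_2 = \alpha_1$; otherwise $ts_2 > ts_1$, and the value $read_2$ returns carries a strictly larger timestamp, i.e.\ belongs to a write ordered after the write of $\alpha_1$.

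\emph{Case $v_1 < v_2$.} Here I would show, by induction along the chain $v_1.succ, \dots, v_2$, that every server carries a stored timestamp $\ge ts_1$ from the instant its $cview$ becomes $v_j$, for each $v_1 < v_j \le v_2$; this is exactly Lemma~\ref{lem:asli} applied repeatedly, since $read_1$'s write-back is a completed write of a value with timestamp $\ge ts_1$ into a weighted majority of $v_1$, which therefore propagates to $v_1.succ$ and onward. The base step uses that installing $v_1.succ$ requires $\langle\textbf{state\_update},\dots,v_1,*\rangle$ messages from a weighted majority of servers that were in $v_1$ (Lemma~\ref{lem:lem0}); this weighted majority intersects $Q_1$ in some $s^\star$, and the key sub-claim is that $s^\star$ had already reached timestamp $\ge ts_1$ when it emitted that $\textbf{state\_update}$ — because $s^\star$ processed $read_1$'s write-back with r/w enabled in $v_1$, whereas disabling r/w to uninstall $v_1$ is a one-way transition for $v_1$ (r/w is re-enabled only after $s^\star$ moves to $v_1.succ$, where $cview \ne v_1$). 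Hence the $maxts$ computed in Lines~\ref{alg:line:maxts}--\ref{alg:line:set-ts-cid-val} of Algorithm~\ref{alg:change-view} is $\ge ts_1$, so every server entering $v_1.succ$ stores a timestamp $\ge ts_1$; the inductive step is identical, now with the read weighted majority already storing $\ge ts_1$. Then $read_2$'s phase~1 reads a weighted majority of servers with $cview = v_2$, all storing $\ge ts_1$, so $ts_2 \ge ts_1$, and the conclusion follows as in the previous case.

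I expect the main obstacle to be the timing sub-claim inside the induction: that on the server $s^\star$ in the intersection of $Q_1$ with a view-changing quorum, the application of $read_1$'s write-back genuinely precedes the emission of the $\textbf{state\_update}$ carrying $s^\star$'s state into $v_1.succ$ — i.e.\ the view-change machinery cannot race ahead of $read_1$'s phase~2 on a server that $read_1$ counted. Pinning this down cleanly forces a careful account of how ``disable r/w'' in Algorithm~\ref{alg:change-view} interacts with message handling in Algorithm~\ref{alg:abd-server}. A secondary point is the $ts_1 = ts_2$ subcase, which needs timestamps to be read as $(\text{sequence number},\text{client-id})$ pairs compared lexicographically, so that equal timestamps force the same originating write and hence equal values.
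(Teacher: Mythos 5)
Your proof takes essentially the same route as the paper's: reduce the no-view-change situation to standard ABD quorum intersection, and invoke Lemma~\ref{lem:asli} to carry the last written value across view changes (the paper organizes this as four timing cases for where the view change falls relative to the two reads, which collapse to your two cases via Lemmas~\ref{lem:s1} and~\ref{lem:s2}). Your write-up is considerably more detailed than the paper's one-line appeals — in particular, the inductive application of Lemma~\ref{lem:asli} along the view chain and the timing sub-claim that $read_1$'s write-back precedes the emission of the \textbf{state\_update} message on an intersecting server are both points the paper leaves implicit.
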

        \begin{proof}
            Assume that $read_{1}$ starts at time $t_{1}^{s}$ and $read_{2}$ ends
                at time $t_{2}^{e}$, then $t_{1}^{s} < t_{1}^{e} < t_{2}^{s} < t_{2}^{e}$.
            There are four possible, mutually exclusive, following cases.

            \textit{Case 1.} There is no concurrent changing view.
                For this case, the behavior of the algorithm is the same of the basic protocol.
            \textit{Case 2.} There is a concurrent changing view with $read_{1}$ ($t_{1}^{s}<t' < t_{1}^{e}$).
                Since $t' < t_{1}^{e} < t_{2}^{s}$, the view changer algorithm installs
                    $v.succ$ before both $s_{2}$ starts and $s_{1}$ ends.
                Consequently, from Lemmas \ref{lem:s1} and \ref{lem:s2}
                    both $read_{1}$ and $read_{2}$ finish in view $v.succ$.
                Now, there is no difference between this case and Case 1.
            \textit{Case 3.} There is a concurrent changing view between $read_{1}$ and $read_{2}$
                ($t_{1}^{e}< t' < t_{2}^{s}$).
                By using Lemma \ref{lem:asli}, we can ensure the correctness of this lemma.
            \textit{Case 4.} There is a concurrent changing view with $read_{2}$ ($t_{2}^{s}<t' < t_{2}^{e}$).
                This case is similar to Case 2.
        \end{proof}

        \begin{theorem}
            The r/w protocols implement an atomic r/w register.
        \end{theorem}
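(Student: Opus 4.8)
The plan is to obtain the theorem as the combination of the two-phase structure of Algorithm~\ref{alg:abd-client-read} with the view-change lemmas already proved, reducing it to Definition~\ref{def:atomic-register}. First I would recall the classical ABD invariants: every written value carries a unique, totally ordered timestamp pair $(ts, cid)$; each r/w operation runs a query phase (ABD Phase~1) that learns the largest such pair held by a weighted majority of servers and a propagation phase (ABD Phase~2) that writes the chosen value and timestamp back to a weighted majority. By Lemmas~\ref{lem:s1} and~\ref{lem:s2}, every phase of every completed operation finishes in the single last installed view $lastview$; therefore, \emph{within a fixed view}, the standard weighted-quorum-intersection argument applies, since by Assumption~\ref{assumption:quorum-size} any two server sets of total weight greater than $n/2$ share a server. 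Consequently an operation's query phase observes the timestamp written by every operation whose propagation phase completed earlier in that same view.

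The substantive part is to bridge view changes. Here I would invoke Lemma~\ref{lem:asli}: if $\alpha$ is the value of the last write completed in $v = lastview$, then after $v.succ$ is installed, either $\alpha$ (when no write is concurrent with the reconfiguration) or one of $\alpha$ and a strictly newer $\beta$ (when a write $write(\beta)$ with $ts(\alpha)<ts(\beta)$ overlaps the reconfiguration) has been carried into $v.succ$ by the state synchronisation of Lines~\ref{alg:line:start-update-state}--\ref{alg:line:set-ts-cid-val} of Algorithm~\ref{alg:change-view}. Thus no completed write is lost across a reconfiguration and the set of ``live'' timestamps is non-decreasing along the sequence of installed views. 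Combined with the intra-view quorum intersection above, this shows that the timestamp returned by any read is at least that of every write that completed before the read began, and also at least that of any earlier completed read and equal only if the values agree --- which is precisely the content of Lemma~\ref{lem:13}.

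To conclude, take two reads $r_1, r_2$ executed by correct clients with $r_1$ terminating before $r_2$ initiates, as in Definition~\ref{def:atomic-register}, and suppose $r_1$ reads $\alpha$ with timestamp $ts_1$ and $r_2$ reads $\alpha'$ with timestamp $ts_2$. By Lemma~\ref{lem:13}, $ts_1 \le ts_2$, and if $ts_1 = ts_2$ then $\alpha = \alpha'$; since distinct written values have distinct timestamps, $r_2$ either reads $\alpha$ or reads a value written strictly later, i.e.\ more up-to-date than $\alpha$. Since Definition~\ref{def:atomic-register} is phrased only in terms of pairs of reads, this already yields the theorem; for completeness I would additionally check the companion property that a read starting after a write $write(\beta)$ completes returns $\beta$ or a value with a larger timestamp, which follows from the same quorum-intersection and Lemma~\ref{lem:asli} reasoning.

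I expect the main obstacle to be the reconfiguration-concurrent case: arguing rigorously that a write whose propagation phase overlaps the state-collection step of a view change is \emph{either} fully carried into $v.succ$ (case (c) of Lemma~\ref{lem:asli}) \emph{or} safely restarted by the client (case (b), via the restarts in Lines~\ref{line:alg-rw:12}--\ref{alg:line:restart-read} and~\ref{line:alg-rw:30}--\ref{alg:line:restart-write} of Algorithm~\ref{alg:abd-client-read}), so that no partially propagated write can produce a new-old inversion. This is exactly where the interplay of Lemmas~\ref{lem:s1}, \ref{lem:s2}, and~\ref{lem:asli} requires care, and it is the step I would expand most carefully.
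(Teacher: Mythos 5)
Your proposal is correct and follows essentially the same route as the paper: the theorem is obtained by reducing Definition~\ref{def:atomic-register} (which concerns only two non-overlapping reads) to Lemma~\ref{lem:13}, whose proof in turn rests on Lemmas~\ref{lem:s1}, \ref{lem:s2}, and \ref{lem:asli} exactly as you describe. The paper's own proof of the theorem is a one-line appeal to Lemma~\ref{lem:13}; your elaboration of the intra-view quorum intersection and the cross-view propagation via Lemma~\ref{lem:asli} is just the content of those supporting lemmas spelled out.
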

        \begin{proof}
            This proof follows directly from Lemma \ref{lem:13}.
        \end{proof}

%    \subsection{Discussion}\label{subsec:read-write-protocol-discussion}

\section{Performance Evaluation}\label{sec:evaluation}
    In this section, we present a performance evaluation of
        the atomic storage based on our weight reassignment protocol
        to quantify its quorum latency when compared with
		 atomic storage systems based on the following cases:
		(1) the (static) ABD \cite{abd} that uses an SMQS,
		(2) RAMBO \cite{rambo} (a consensus-based reconfiguration protocol), and
		(3) SmartMerge \cite{effModConsensus-freeFST} (a consensus-free reconfiguration protocol).
    We selected SmartMerge because it avoids unacceptable states in contrast to other consensus-free
        reconfiguration protocols (e.g., \cite{effModConsensus-freeFST, dynAtomicStorageWithoutCons,  elasticReconf}).
    We implemented prototypes of the ABD, RAMBO, and SmartMerge protocols in
        the python programming language.
    Besides, we used KOLLAPS \cite{kollaps}, a fully distributed network emulator, to create the network and links' latencies.

    As we explained in Section \ref{sec:introduction}, reconfiguration protocols present two special functions: \textit{join} and \textit{leave}.
    Servers can join/leave the system by calling these functions.
    Reconfiguration protocols require to be adapted to be used as weight requirement protocols.
    To do so, we change \textit{join} and \textit{leave} functions of RAMBO and SmartMerge to \textit{increase} and \textit{decrease} functions, respectively.
    Each server can request to increase/decrease its weight using \textit{increase}/\textit{decrease} functions.
    Particularly, each server
        can call \textit{increase} and \textit{decrease} functions every $\delta$ unit of time ($0 < \delta$ is a constant)
        as follows to increase/decrease its weight.
    Assume that the latency score of server $s$ are  $ls_{t}$ and $ls_{t'}$ respectively at time $t$ and $t' = t + \delta$.
    Also, assume that the total latency scores of servers computed by $s$ are $LS_{t}$ and $LS_{t'}$ respectively at time $t$ and $t'$.
    Server $s$ calls function \textit{increase} (resp. \textit{decrease})
        to increase (resp. decrease) its weight at time $t'$
        if $ls_{t}/LS_{t} + \tau < ls_{t'}/LS_{t'}$ (resp. $ls_{t'}/LS_{t'} + \tau < ls_{t}/LS_{t}$),
        where $\tau$ is a threshold for changing weights.

    We used one 1.8 GHz 64-bit Intel Core i7-8550U, 32GB of RAM machine.
    KOLLAPS executes each server and client in a separate Docker container \cite{docker}, and the containers communicate through the Docker Swarm \cite{dockerSwarm}.
    We set the numbers of servers and clients to five and ten, respectively.
    Moreover, at most, one server can fail ($f=1$).
    Each client sends a new r/w request as soon as receiving the response of the previously sent r/w request.
    Since there is no difference between r/w protocols regarding the number of communication rounds in our r/w protocols, we set the r/w ratio to $0.5$.

    The duration of each run is 200 seconds.
    In each run, latencies of links are changed every $\Delta = 10$ seconds while the processes are unaware of the value of $\Delta$; we set $\epsilon = 0.1$.
    We executed 100 runs and computed the average of the results that is depicted in Figure \ref{fig:evaluation}.
    The average quorum latencies of the ABD, RAMBO, SmartMerge and our protocol are
        $139$, $118$, $128$, and $101$ milliseconds, respectively.
    The ABD protocol requires the responses of three processes to decide whether a quorum is
        constituted while other protocols might constitute their quorums by two processes.
    Therefore, the quorum latencies of other protocols are less than the ABD  on average.
    In the RAMBO protocol, some views might be active at a time, while in our
        protocol, there is only one installed view at any time;
        hence, our protocol outperforms the RAMBO protocol on average.
    In the SmartMerge protocol, servers might pass intermediate views to install a new view;
        besides, for every r/w operation, it is required to communicate with a quorum of processes to be sure that the r/w operation is executed with the most up-to-date weights.
        However, servers with the view equal to $lastview$
        directly change their views to the new view in our protocol.
        Also, each server $s$ knows its weight, i.e., $s$ does not need to communicate with others to determine its most up-to-date weight.
        Hence, the quorum latency of our protocol is less than SmartMerge on average.
%    The performance evaluation confirms that our solution outperforms the others.

    \begin{figure*}[tb]
        \centering
        \includegraphics[scale=0.475]{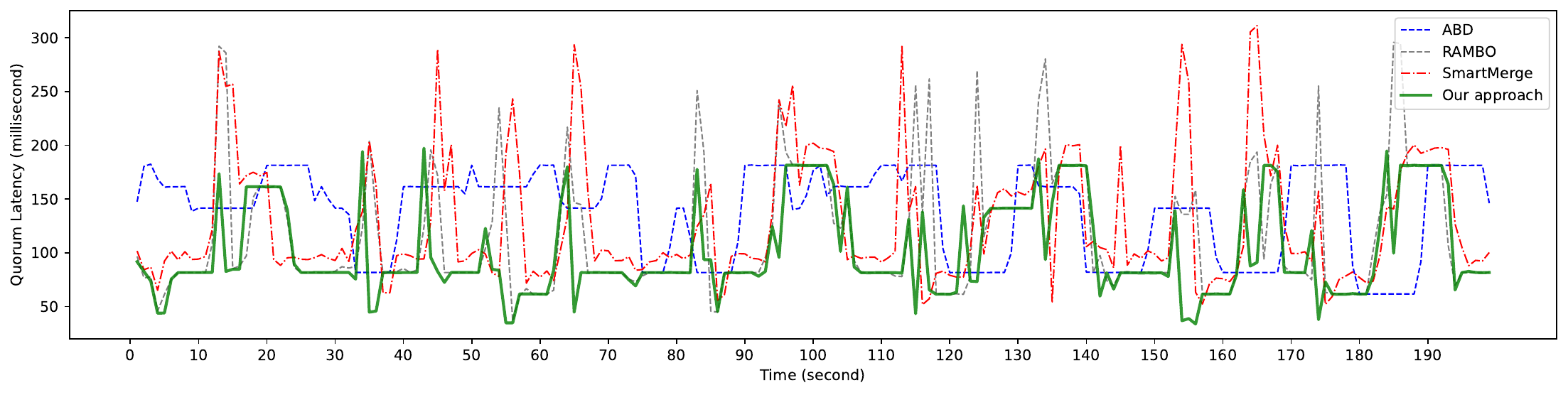}
        \caption{Quorum latency evolution for our protocol, ABD, RAMBO, and SmartMerge.}
        \label{fig:evaluation}
    \end{figure*}

%    \begin{figure}[htb]
%        \centering
%        \includegraphics[scale=0.475]{}
%        \caption{Quorum latency evolution for our protocol, ABD, RAMBO, and SmartMerge.}
%        \label{fig:evaluation2}
%    \end{figure}
%\input{sections/related-work.tex}
\section{Conclusion and Future Work}\label{sec:conclusion}
	In this paper, we present a novel consensus-free and crash fault-tolerant weight reassignment protocol
		that can be used to improve the performance of atomic read/write storage systems.
	The distinguishing feature of our protocol compared to previous solutions is that for executing each r/w operation,
		it is not required to communicate with a quorum of processes for finding the most up-to-date processes' weights, providing better efficiency.
	The evaluation results show that our protocol outperforms other solutions.
	We assume that the set of servers does not change over time;
		however, the protocol can be extended to consider that servers can
		leave and new servers can join the system as future work.
	Besides, every client sends each of its requests to all servers.
	Working on using strategies for selecting a subset of servers to send
		requests for improving the network congestion can be another direction for future improvement.
	Extending the failure model to Byzantine failures could be another direction for future work as well.

\bibliographystyle{IEEEtran}
\bibliography{../ref.bib}

% Generated by IEEEtran.bst, version: 1.14 (2015/08/26)
\begin{thebibliography}{10}
\providecommand{\url}[1]{#1}
\csname url@samestyle\endcsname
\providecommand{\newblock}{\relax}
\providecommand{\bibinfo}[2]{#2}
\providecommand{\BIBentrySTDinterwordspacing}{\spaceskip=0pt\relax}
\providecommand{\BIBentryALTinterwordstretchfactor}{4}
\providecommand{\BIBentryALTinterwordspacing}{\spaceskip=\fontdimen2\font plus
\BIBentryALTinterwordstretchfactor\fontdimen3\font minus
  \fontdimen4\font\relax}
\providecommand{\BIBforeignlanguage}[2]{{%
\expandafter\ifx\csname l@#1\endcsname\relax
\typeout{** WARNING: IEEEtran.bst: No hyphenation pattern has been}%
\typeout{** loaded for the language `#1'. Using the pattern for}%
\typeout{** the default language instead.}%
\else
\language=\csname l@#1\endcsname
\fi
#2}}
\providecommand{\BIBdecl}{\relax}
\BIBdecl

\bibitem{interprocessComI}
L.~Lamport, ``On interprocess communication (part i),'' \emph{Distributed
  Computing}, vol.~1, pp. 77--85, 1986.

\bibitem{fab}
Y.~Saito, S.~Fr{\o}lund, A.~Veitch, A.~Merchant, and S.~Spence, ``Fab: building
  distributed enterprise disk arrays from commodity components,'' \emph{ACM
  SIGPLAN Notices}, vol.~39, pp. 48--58, 2004.

\bibitem{gpfs}
F.~B. Schmuck and R.~L. Haskin, ``Gpfs: A shared-disk file system for large
  computing clusters,'' in \emph{FAST}, vol.~2, 2002.

\bibitem{quorumSystems}
M.~Vukoli{\ifmmode\acute{c}\else\'{c}\fi}, \emph{{Quorum Systems: With
  Applications to Storage and Consensus}}, 2012.

\bibitem{effModConsensus-freeFST}
E.~Alchieri, A.~Bessani, F.~Greve, and J.~Fraga, ``Efficient and modular
  consensus-free reconfiguration for fault-tolerant storage,'' in
  \emph{OPODIS}, 2017, pp. 1--26.

\bibitem{abd}
H.~Attiya, A.~Bar-Noy, and D.~Dolev, ``Sharing memory robustly in
  message-passing systems,'' \emph{Journal of the ACM}, vol.~42, pp. 124--142,
  1995.

\bibitem{gridQS}
S.~Y. Cheung, M.~H. Ammar, and M.~Ahamad, ``The grid protocol: A high
  performance scheme for maintaining replicated data,'' \emph{IEEE Transactions
  on Knowledge and Data Engineering}, vol.~4, pp. 582--592, 1992.

\bibitem{loadCapacityAvaiQS}
M.~Naor and A.~Wool, ``{The Load, Capacity, and Availability of Quorum
  Systems},'' \emph{SIAM J. Comput.}, vol.~27, pp. 423--447, 1998.

\bibitem{treeQS}
D.~Agrawal and A.~El~Abbadi, ``The tree quorum protocol: An efficient approach
  for managing replicated data.'' in \emph{VLDB}, vol.~90, 1990, pp. 243--254.

\bibitem{hierarchicalQS}
A.~Kumar, ``Hierarchical quorum consensus: a new algorithm for managing
  replicated data,'' \emph{IEEE transactions on Computers}, vol.~40, 1991.

\bibitem{readWriteQS}
M.~Whittaker, A.~Charapko, J.~M. Hellerstein, H.~Howard, and I.~Stoica,
  ``Read-write quorum systems made practical,'' in \emph{PaPoC}, 2021, pp.
  1--8.

\bibitem{rambo}
S.~Gilbert, N.~A. Lynch, and A.~A. Shvartsman, ``{Rambo: a robust,
  reconfigurable atomic memory service for dynamic networks},'' \emph{Distrib.
  Comput.}, vol.~23, pp. 225--272, 2010.

\bibitem{minResponceTimeQS}
F.~Oprea and M.~K. Reiter, ``Minimizing response time for quorum-system
  protocols over wide-area networks,'' in \emph{DSN}, 2007, pp. 409--418.

\bibitem{AWARE}
C.~Berger, H.~P.~Reiser, J.~Sousa, and A.~Neves~Bessani, ``{AWARE: Adaptive
  wide-area replication for fast and resilient byzantine consensus},''
  \emph{IEEE Transactions on Dependable and Secure Computing}, 2020.

\bibitem{reliabilityVoting}
D.~Barbara and H.~Garcia-Molina, ``{The Reliability of Voting Mechanisms},''
  \emph{IEEE Trans. Comput.}, vol.~36, pp. 1197--1208, 1987.

\bibitem{fLP}
M.~J. Fischer, N.~A. Lynch, and M.~S. Paterson, ``Impossibility of distributed
  consensus with one faulty process,'' \emph{Journal of the ACM}, vol.~32, pp.
  374--382, 1985.

\bibitem{wait-free}
M.~Herlihy, ``Wait-free synchronization,'' \emph{ACM Transactions on
  Programming Languages and Systems}, vol.~13, pp. 124--149, 1991.

\bibitem{jehl2017case}
L.~Jehl and H.~Meling, ``The case for reconfiguration without consensus:
  Comparing algorithms for atomic storage,'' in \emph{OPODIS}, 2017.

\bibitem{dynVotingAlgMaintainingConsistency}
S.~Jajodia and D.~Mutchler, ``Dynamic voting algorithms for maintaining the
  consistency of a replicated database,'' \emph{ACM Transactions on Database
  Systems}, vol.~15, pp. 230--280, 1990.

\bibitem{mutualExclusionDynamicVoteRessignment}
D.~Barbara, H.~Garcia-Molina, and A.~Spauster, ``Increasing availability under
  mutual exclusion constraints with dynamic vote reassignment,'' \emph{ACM
  Transactions on Computer Systems}, vol.~7, pp. 394--426, 1989.

\bibitem{dynAtomicStorageWithoutCons}
M.~K. Aguilera, I.~Keidar, D.~Malkhi, and A.~Shraer, ``Dynamic atomic storage
  without consensus,'' \emph{Journal of the ACM}, vol.~58, pp. 1--32, 2011.

\bibitem{elasticReconf}
E.~Gafni and D.~Malkhi, ``Elastic configuration maintenance via a parsimonious
  speculating snapshot,'' in \emph{DISC}, 2015, pp. 140--153.

\bibitem{reconfigureASTutorial}
M.~K. Aguilera, I.~Keidar, D.~Malkhi, J.-P. Martin, A.~Shraer \emph{et~al.},
  ``Reconfiguring replicated atomic storage: A tutorial,'' \emph{Bulletin of
  the EATCS}, pp. 84--108, 2010.

\bibitem{spiegelman2017dynamic}
A.~Spiegelman, I.~Keidar, and D.~Malkhi, ``Dynamic reconfiguration: Abstraction
  and optimal asynchronous solution,'' in \emph{DISC}, 2017.

\bibitem{smartmerge}
L.~Jehl, R.~Vitenberg, and H.~Meling, ``Smartmerge: A new approach to
  reconfiguration for atomic storage,'' in \emph{DISC}, 2015, pp. 154--169.

\bibitem{spiegelman2017liveness}
A.~Spiegelman and I.~Keidar, ``On liveness of dynamic storage,'' in
  \emph{International Colloquium on Structural Information and Communication
  Complexity}, 2017, pp. 356--376.

\bibitem{geoReplicatedSMR}
J.~Sousa and A.~Bessani, \emph{{Separating the WHEAT from the Chaff: An
  Empirical Design for Geo-Replicated State Machines}}, 2015.

\bibitem{interprocessComII}
L.~Lamport, ``On interprocess communication (part ii),'' \emph{Distributed
  Computing}, vol.~1, pp. 86--101, 1986.

\bibitem{kollaps}
P.~Gouveia, J.~Neves, C.~Segarra, L.~Liechti, S.~Issa, V.~Schiavoni, and
  M.~Matos, ``Kollaps: decentralized and dynamic topology emulation,'' in
  \emph{Proceedings of the Fifteenth European Conference on Computer Systems},
  2020, pp. 1--16.

\bibitem{docker}
D.~Merkel \emph{et~al.}, ``Docker: lightweight linux containers for consistent
  development and deployment,'' \emph{Linux journal}, vol. 2014, p.~2, 2014.

\bibitem{dockerSwarm}
\BIBentryALTinterwordspacing
Docker swarm. Accessed: 19-10-2021. [Online]. Available:
  \url{https://docs.docker.com/engine/swarm/}
\BIBentrySTDinterwordspacing

\end{thebibliography}

\appendix\label{appendix-quoracle}
	In Section \ref{sec:introduction}, we presented an example to show
		that throughput of a storage system can be improved using WMQS in contrast to SMQS.
	In Figure \ref{code:quoracle}, we present the code written using Python programming language
		and \textit{quoracle} library to compute throughput of that example.
	We defined four nodes (processes), read quorums for SMQS, and read quorums for WMQS (for simplicity, we omitted the write quorums).
	Finally, the throughput of the storage systems that are based on SMQS and WMQS are computed by calling the \textit{capacity} function.

	\newpage
	\begin{figure}[t!]
		\vspace{-15cm}
		\centering
		\begin{python}
			from quoracle import *
			if __name__ == '__main__':
				p1 = Node('p1', capacity=1000)
				p2 = Node('p2', capacity=800)
				p3 = Node('p3', capacity=400)
				p4 = Node('p4', capacity=200)
				smqs = QuorumSystem(reads=p1*p2*p3 +
																	p1*p2*p4 +
																	p2*p3*p4 +
																	p1*p2*p3*p4)
				wmqs = QuorumSystem(reads=p1*p2 +
																	p1*p3 +
																	p1*p2*p3 +
																	p1*p2*p4 +
																	p2*p3*p4 +
																	p1*p2*p3*p4)
				print(smqs.capacity(read_fraction=1))
				# output: 599.99997
				print(wmsqs.capacity(read_fraction=1))
				# output: 800.0
		\end{python}
		\caption{The code written using \textit{quoracle} library to compare the throughput of SMQS vs. WMQS}
		\label{code:quoracle}
	\end{figure}

\end{document}